\def\BibTeX{{\rm B\kern-.05em{\sc i\kern-.025em b}\kern-.08em
    T\kern-.1667em\lower.7ex\hbox{E}\kern-.125emX}}
\theoremstyle{plain}
\newtheorem{thm}{Theorem}
\newtheorem{lem}[thm]{Lemma}
\newtheorem{prop}[thm]{Proposition}
\theoremstyle{definition}
\newtheorem{defn}[thm]{Definition}
\newtheorem{problem}[thm]{Problem}
\DeclareMathAlphabet{\mymathbb}{U}{BOONDOX-ds}{m}{n}
\newcommand{\R}{\mathbb{R}}
\newcommand{\W}{\mathcal{W}}
\newcommand{\K}{\mathcal{K}}
\newcommand{\D}{\mathfrak{D}}
\newcommand{\half}{\tfrac{1}{2}}
\newcommand{\lp}{\left(}
\newcommand{\rp}{\right)}
\newcommand{\of}{\circ}
\newcommand{\QDbh}{\hat{\bar{Q}}_{\rmd}}
	\newcommand{\req}[1]{(\ref{#1.eq})} 
	\newcommand{\be}{\begin{equation}} 
	\newcommand{\ee}{\end{equation}} 
	\newcommand{\bbm}{\begin{bmatrix}}
	\newcommand{\ebm}{\end{bmatrix}}
	\newcommand{\rmr}{{\rm r}}
	\newcommand{\rmd}{{\rm d}} 
	\newcommand{\rmx}{{\rm x}} 
	\newcommand{\rmy}{{\rm y}} 
	\newcommand{\rmm}{{\rm m}} 
	\newcommand{\rma}{{\rm a}} 
	\newcommand{\hstm}{\hspace{2em}}
	\newcommand{\hsom}{\hspace{1em}}
	\newcommand{\rom}{{\rule{0em}{1em}}}
	\newcommand{\romn}{\rule{0em}{.9em}}
        \newcommand{\sm}{{\text{-}}}
        \newcommand{\marg}{{{\Pi}}}
        \newcommand{\ssR}{{\scriptscriptstyle R}}
        \newcommand{\ssD}{{\scriptscriptstyle D}}
        \newcommand{\QR}{{Q_{\!\ssR}}}
        \newcommand{\QRh}{{\hat{Q}_{\!\ssR}}}
        \newcommand{\QD}{{Q_{\!\ssD}}}
        \newcommand{\QDb}{{\bar{Q}_{\!\ssD}}}
        \newcommand{\FR}{{F_{\!\ssR}}}
        \newcommand{\lcb}{\left\{}
        \newcommand{\rcb}{\right\}}
        \newcommand{\lB}{\!\left(}
        \newcommand{\rB}{\right)}
        \newcommand{\sss}{\scriptscriptstyle}
        \newcommand{\bone}{{\mathbbm{1}}}
        \newcommand{\smint}[2]{{\scaleobj{.8}{\int_{{#1}}^{{#2}}}}}
        \newcommand{\smsum}[2]{{\scaleobj{.8}{\sum_{{#1}}^{{#2}}}}}
        \newcommand{\caP}{{\cal P}}
        \newcommand{\caI}{{\cal I}}
        \newcommand{\rmq}{{\rm q}}
        \newcommand{\Pfav}[1]{\bar{#1}}
        \newcommand{\Pav}[1]{\bar{#1}}
        \newcommand{\Pq}{{\cal P}_{\!\rmq}}
        \newcommand{\ric}{{\sf p}}
        \newcommand{\sfY}{{\sf Y}}
        \newcommand{\sfM}{{\sf M}}
        \newcommand{\ff}{{\sf y}}
        \newcommand{\sT}{{\scriptstyle T}}
        \newcommand{\rha}{\hat{r}}
        \newcommand{\dha}{\hat{d}}
        \newcommand{\mycaption}[1]{\caption{\footnotesize #1}}
\newcommand\copyrighttext{%
	\footnotesize\textcopyright 2025 IEEE. Personal use of this material is permitted.  Permission from IEEE must be obtained for all other uses, in any current or future media, including reprinting/republishing this material for advertising or promotional purposes, creating new collective works, for resale or redistribution to servers or lists, or reuse of any copyrighted component of this work in other works.}
\newcommand\copyrightnotice{
	\begin{tikzpicture}[remember picture,overlay]
		\node[anchor=south,yshift=3pt] at (current page.south) {\fbox{\parbox{\dimexpr\textwidth-\fboxsep-\fboxrule\relax}{\copyrighttext}}};
\end{tikzpicture}}
\begin{document}

\title{Optimal Assignment and Motion Control in Two-Class Continuum Swarms} 

\author{Max Emerick, \IEEEmembership{Graduate Student Member, IEEE}, Stacy Patterson, and Bassam Bamieh, \IEEEmembership{Fellow, IEEE}
\thanks{This article was submitted for review on July 24, 2024. This work was supported in part by NSF award ECCS-1932777 and CMI-1763064.}
\thanks{Max Emerick and Bassam Bamieh are with the Dept. of Mechanical Engineering, University of California Santa Barbara ({\em \{memerick,bamieh\}@ucsb.edu}). Stacy Patterson is with the Dept. of Computer Science, Rensselaer Polytechnic Institute ({\em sep@cs.rpi.edu}).}
\thanks{A portion of the results in this paper appeared in a preliminary version in the proceedings of the 9th IFAC Conference on Networked Systems (NecSys22) \cite{Emerick2022}. That conference article was a preliminary announcement of the results and did not include new mathematical insights and proofs presented here, nor the examples and the important case of periodically time-varying demands.}
\thanks{The authors thank Connor Hughes for valuable discussions in the early stages of this work.}}

\maketitle

\begin{abstract}                
We consider optimal swarm control problems where two different classes of agents are present. Continuum idealizations of large-scale swarms are used where the dynamics describe the evolution of the spatially-distributed densities of each agent class. The problem formulation we adopt is motivated by applications where agents of one class are assigned to agents of the other class, which we refer to as demand and resource agents respectively. Assignments have costs related to the distances between mutually assigned agents, and the overall cost of an assignment is quantified by a Wasserstein distance between the densities of the two agent classes. When agents can move, the assignment cost can decrease at the expense of a physical motion cost, and this tradeoff sets up a nonlinear infinite-dimensional optimal control problem. We show that in one spatial dimension, this problem can be converted to an infinite-dimensional, but decoupled, linear-quadratic (LQ) tracking problem when expressed in terms of the quantile functions of the respective agent densities. Solutions are given in the general one-dimensional case, as well as in the special cases of constant and periodically time-varying demands.
\end{abstract}

\begin{IEEEkeywords}
Optimal Control;
Cyber-Physical Systems;
Spatially-Distributed Systems;
Networks of Autonomous Agents
\end{IEEEkeywords}

\copyrightnotice
\vspace{-5mm}


\section{Introduction}

Low-cost sensing, processing, and communication hardware is driving the use of autonomous swarms of robotic agents in diverse settings, including emergency response, transportation, logistics, data collection, and defense. Large swarms can have significant advantages in efficiency and robustness. However, as swarms scale in size it becomes increasingly difficult to plan and coordinate motion between agents. For sufficiently large swarms, modeling the swarm as a density distribution over the domain (i.e. as a continuum) provides a significant model reduction as well as improved insight into the macroscopic behavior of the swarm. Thus, the development of motion planning and control strategies for systems described by distributions is a problem of interest.

The problem formulation we propose in this paper is partially motivated by applications in edge computing \cite{He2021,Yang2019,Zhou2018} and mobile cloudlets \cite{Li2014,Chen2015}, as will be explained in the next section on problem formulation. We develop a model where densities of two classes of agents -- referred to as demand and resource -- interact in a physical space. We propose a problem where demands and resources must be dynamically matched/assigned, while resources are physically redistributed to lower the cost of assignment. In this way, our problem comprises two parts: spatio-temporal dynamic matching \cite{Ayala2018,Lowalekar2018,Kanoria2021} and spatio-temporal control.

We approach this problem using tools from the areas of optimal mass transport and optimal control. While our particular problem formulation is new, there has been a recent surge of interest in the connections between these two fields. The first contact here was made in the classic paper \cite{Benamou2000}, where it was shown that the optimal transport problem has a ``dynamic'' formulation as an optimal control problem. More recently, there have been numerous papers investigating problems of optimal transport and optimal control for multi-agent swarms in the contexts of self-interaction \cite{Fornasier2014,Carrillo2014,Bonnet2019,Bonnet2019a,Bonnet2021,Bongini2017,Jimenez2020,Burger2021,Chen2023}, coverage control \cite{Bandyopadhyay2014,Krishnan2018a,Inoue2021,Haasler2021,Terpin2024}, shape control \cite{Lin2022}, and tracking \cite{Kachar2022,Abdulghafoor2023}. The work we present here is most similar to these last two references in that we focus on the problem of tracking. However, it differs in that we investigate a different model, and obtain mainly analytical (rather than numerical) results.

We  point out that while distributed decision making approaches are important in multi-agent systems, we adopt a centralized control approach in the current work and do not consider the problem of distributed implementation. We do this to understand ultimate performance limitations, i.e., to provide benchmarks by which to evaluate  control strategies. This idea is motivated by the use of optimal control problems for co-design of systems and controllers, rather than designing controllers for an existing system.

The rest of the paper proceeds as follows. We formulate (Section~\ref{prob_formulation}) an optimal control problem where the cost is a tradeoff between the cost of assignments (quantified by the Wasserstein distance), and the physical cost of moving resources. The decision variable is the velocity field of the resource agents, which follow an advection model. As such, this is an infinite-dimensional nonlinear optimal control problem. In the special case of one spatial dimension (Section~\ref{1D}) we recast the problem in terms of the quantile functions of the densities, and show that, remarkably, this problem transforms into a linear-quadratic (LQ) tracking problem. The problem is still infinite dimensional, but with a largely uncoupled structure which allows for explicit solutions. These solutions are then re-expressed in terms of the original problem variables. We apply these solutions in the two special cases (Section~\ref{explicit.sec}) of constant and periodically time-varying demand to demonstrate the method. In the case of constant demand, we find that resource agents follow Wasserstein geodesics, but with a time schedule determined by the solution of the optimal control problem. In the case of periodic demand, solutions are given in terms of temporal Fourier transforms, and optimal motions are interpreted as filtered versions of tracking signals. We conclude (Section~\ref{conclusion}) with a brief discussion and pointers to the many possible future directions.

\subsection*{Notation and Preliminaries} \label{preliminaries}

For an introduction to optimal transport and the Wasserstein distance, see \cite{Santambrogio2010}.
We use generalized functions to describe  densities, i.e.,  Dirac distributions rather than ``atomic measures''. For a formal treatment of generalized functions, see \cite{Gelfand1964}. Notations like $R_t(x)$ = $R(x,t)$ or $R_t$ = $R_t(\cdot)$ = $R(\cdot,t)$ are used interchangeably to emphasize $R$ either as a parameterized curve in function space or as a spatio-temporal field.


\section{Problem Formulation} \label{prob_formulation}

We consider settings where two classes of mobile agents -- referred to as {\em demand} and {\em resource} -- interact in a physical space. We consider macroscopic descriptions of both agent classes so that each can be modeled as a density function on the domain. Physical space is modeled as a bounded convex subset $\Omega \subset \R^n$, where $n = 1,~ 2,~ \text{or} ~ 3$.

Demand agents are assumed to be lightweight (for example, having little processing power or memory, but agile for use in sensing and exploration). They can communicate and offload their computing and long-distance communication tasks to heavier-duty ``resource'' agents, which tend to be less agile. For example, demand agents may be  inexpensive camera-equipped drones that are tasked with surveilling a disaster area to assist in search and rescue~\cite{Lyu2023}. These drones have limited space and limited battery capacity, and thus cannot provide complex on-board video processing.  Instead, these drones offload  video analysis tasks to higher resourced mobile edge computing servers, the resource agents, which can be larger fixed-wing drones equipped with sufficient computing power~\cite{Adnan2024,Blair2022}.

We define a (possibly time-varying) {\em assignment} map between demand and resource, encoding which tasks are assigned to which agents. Assignment maps incur a cost determined by the physical distance between paired agents, and the minimum cost of assignment is seen to be the Wasserstein distance between the resource and demand densities. This cost can capture a latency penalty in the surveillance application, for example, due to transmitting video data from demand agents to resource agents. We assume that demand agents' physical locations are primarily determined by the tasks assigned to them, and so is an external signal in our problem, whereas resource agents' motions are the decision variables. The assignment cost can thus decrease if resources move closer to demands, but this has an additional physical {\em motion} cost for the system. Thus assignment and motion costs are competing objectives, and we formulate a control problem that provides the optimal tradeoff between these two costs.

Our mathematical model thus comprises the following five components:

\textbf{(1)} 
A \textit{demand distribution} $D_t(\cdot)$ defined over $\Omega\subset\R^n$, which in its most general form is 
\be
D_t(x) ~=~ \rmd(x,t) ~+~ \smsum{k=1}{N_d} \rmd_k(t) ~\delta \left( x-\xi_k(t) \right) .
\label{Dt_def.eq}
\ee
The demand distribution thus has a continuous and a discrete component: $\xi_k$ represents the spatial location of the $k$'th discrete agent whose demand is quantified by the (possibly time-varying) function $\rmd_k(\cdot)$, while the continuous part $\rmd(\cdot,\cdot)$ represents a continuum model of a large-scale system that is best described in terms of its demand density.

\textbf{(2)} 
A \textit{resource distribution} $R_t(\cdot)$ also defined over $\Omega \subset \R^n$, describing the distribution of resources at time $t$
\be
R_t(x) ~=~ \rmr(x,t) ~+~ \smsum{k=1}{N_\rmr} \rmr_k(t) ~\delta \left( x-\eta_k(t) \right) , 
\label{Rt_def.eq}
\ee
with similar interpretations as those given to $D_t$ in~\req{Dt_def}.

In this paper, we assume that both distributions are nonnegative and are normalized to integrate to 1 at all times
\begin{equation} \label{normalized_mass_eq}
	\smint{\Omega}{} R_t(x)\,dx ~=~ \smint{\Omega}{} D_t(x)\,dx ~=~ 1 .
\end{equation}
$R_t$ and $D_t$ are thus elements of {\em the set of normalized nonnegative distributions over $\Omega$}, which we denote $\D(\Omega)$. (See the conclusion for comments on generalizing this assumption.)

\textbf{(3)}
An \textit{assignment plan} $\K_t (\cdot,\cdot)$, which is a normalized nonnegative distribution over $\Omega \times \Omega$. The value $\K_t(x,y)$ specifies the quantity of tasks assigned to resources at location $x$ by demands at location $y$ at time $t$. Thus, in the most general case, resource agents can handle tasks from multiple demand agents and vice versa.

With $D_t$ and $R_t$ normalized, the interpretation of the two-variable function $\K_t(\cdot,\cdot)$ as assignment of demands to resources gives the following ``marginalization property''\footnote{Since $R_t$ and $D_t$ are density functions with mass 1, they have interpretations as probability density functions of random variables. In this case, $\K_t$ can be interpreted as a joint distribution of those random variables.}
\be
\begin{aligned}
	R_t(x) ~&=~ \smint{\Omega}{} \K_t(x,y)\,dy ~=:~  \left( \marg_\rmx \K_t \right) (x) , \\
	D_t(y) 		~&=~  \smint{\Omega}{} \K_t(x,y)\,dx ~=:~\left(  \marg_\rmy \K_t \right) (y) .
\end{aligned}
\label{marg_const.eq}
\ee
We refer to $\marg_\rmx$ and $\marg_\rmy$ as the marginalization operators onto $x$ and $y$ respectively and write the above equations 
compactly using the notation $\marg \K_t ~=~ (R_t,D_t)$. The assignment plan $\K_t$ is formally an element of $\D(\Omega \times \Omega)$ with additional constraints given by \eqref{marg_const.eq}. The assignment plan is one of the decision variables in our optimal control problem. Each assignment plan incurs an assignment cost which will be defined shortly.

\textbf{(4)} The {\em spatio-temporal resource dynamics} are described by the continuity PDE
\begin{equation}
	\partial_t R_t(x) ~=~  -\nabla \cdot \big( V_t(x) \, R_t(x) \big) , 
	\label{advec.eq}
\end{equation}
where $\nabla \cdot := \sum_i \partial_{x_i}$ denotes the divergence operator, and $V_t(x):= \left[ v_1(x,t)  \cdots  v_n(x,t) \right]^T$ is a time-varying velocity field which ``steers'' resources in space. This  velocity field is also a decision variable in our optimal control problem. 

The continuity equation~\req{advec} is used for both continuum and discrete models (requiring, of course, the proper notion of weak solution).
Note that this velocity field is defined as a function of space, thus the velocity of each agent is determined by its location in space rather than its identity. Importantly, this leads to  an {\em implicit constraint} that any two agents at the same location must move with the same velocity.

\textbf{(5)} The {\em performance objective}. We first consider the cost associated with an assignment plan $\K_t$. This cost is quantified by the {\em total weighted (squared) distance}
\begin{equation}
	C_\rma (\K_t) ~:=~ \smint{\Omega \times \Omega}{} ||y-x||_2^2  ~\K_t(x,y) ~dx\,dy.
	\label{Ca_def.eq}
\end{equation}
The {\em assignment cost} \req{Ca_def} is ``location aware'' in the sense that assigning resources to demands that are far away incurs a high cost and vice versa. In our example, \req{Ca_def} is interpreted as the latency penalty for a given assignment.

The expression~\req{Ca_def} is familiar from optimal transport theory. 
At a fixed moment in time, optimization of $C_\rma$ subject to the 
marginalization constraints $\marg \K_t=(R_t,D_t)$ is known~\cite{Santambrogio2015} to be the {\em 2-Wasserstein distance} between $R_t$ and $D_t$
\be
\W^2_2(R_t,D_t) = 
\inf_{ \marg \K_t = (R_t,D_t) } 
\smint{\Omega\times\Omega}{} \|y-x\|_2^2 \, \K_t (x,y) \, dx\,dy,
\label{Kb_opt.eq}
\ee
while the optimal solution $\K_t$ is the {\em optimal transport plan}. We emphasize that in the present context, however, $C_\rma$ does not represent a physical transport cost (as in optimal transport theory), but rather an {\em assignment cost}, that is, the total cost of assignments of tasks from demands to resources. We therefore refer to an optimizer of~\req{Kb_opt} as an optimal {\em assignment plan} rather than an optimal {\em transport plan} to distinguish it from its traditional interpretation. In our context, the physical cost of motion is quantified differently, as we now explain.

It is clear that when $R_t$ and $D_t$ are far apart, the optimal assignment cost $\W^2_2(R_t,D_t)$ is high. It can be reduced if the resource $R_t$ can be ``transported'' (via~\req{advec}) to be closer to the demand $D_t$, but this must also have a cost. We quantify this cost of physical motion with the following {\em motion cost} 
\begin{equation} \label{v_cost.eq}
	C_\rmm \left( R_t , V_t \right) ~:=~ \smint{\Omega}{} \left\| V_t(x) \right\|_2^2 R_t(x) ~dx . 
\end{equation}
In our example, this quantity is interpreted as the instantaneous power required to overcome drag on the resource swarm. Over a finite-time maneuver, the time-integral of the above cost is thus the total energy expended in moving the resource agents.

Finally, we combine the two costs in the following aggregate expression for cost of maneuvers over a time horizon $[0,T]$ 
\begin{equation}
	J(\K,R,V) ~:=~   \smint{0}{T} \lp C_\rma \left( \K_t \right) +\alpha^2 C_\rmm \left( R_t,V_t \right) \rp dt , 
	\label{J_def.eq}
\end{equation}
where $\alpha>0$ is a ``trade off'' parameter. The two objectives $C_\rma$ and $C_\rmm$ are clearly competing. If motion cost were negligible ($\alpha \ll 1$), then the optimal solution would be to move $R_t$ quickly so that it matches $D_t$, and then $C_\rma$ becomes small. However, if motion were expensive ($\alpha \gg 1$), then the optimal solution would tolerate a higher assignment cost while redistributing resources slowly. The length of the time horizon $T$ will also factor into the trade-off between these two costs. 
Note that while the expression~\req{J_def} does not explicitly involve $D_t$, it does depend on it through the marginalization constraint. To clarify, 
we state the problem formally. 
\begin{problem}\label{gen_model}
	Given an initial resource distribution $R_0$ and a time-varying demand distribution $D$, solve
	\begin{align} 
		\inf_{R,V,\K} \, \smint{0}{T} \Big( C_\rma \left( \K_t \right)& +\alpha^2  C_\rmm \left( R_t,V_t \right)  \rom  \Big) \, dt ,			 \\
		\mbox{s.t.} \quad											
		\partial_t R_t(x) =& -\nabla \cdot \left( R_t(x) \, V_t(x) \right) ,						\label{opt1_state.eq}			\\
		\marg \K_t =& (R_t,D_t) 		.											\label{opt1_marg.eq}
	\end{align} 
\end{problem}
In this problem, $D$ is assumed to be a given external signal which induces a constraint~\req{opt1_marg} on any assignment plan $\K$. $V$ and $\K$ are decision variables, while $R$ can be thought of as the ``state'' whose dynamics are given by~\req{opt1_state}. Figure \ref{pict_model} shows a pictorial representation of this model.

\begin{figure}[t]
	\centering
	
	\includegraphics[width=.8\linewidth]{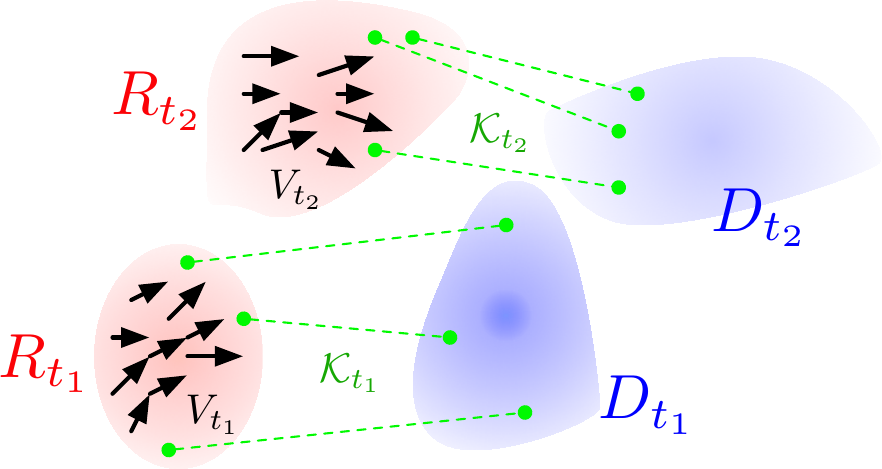}
	
	\mycaption{Depiction of the basic problem formulation. A time-varying demand distribution $D_t$ offloads tasks to a resource distribution $R_t$. The ``task assignment'' (depicted by the green lines) is the optimal instantaneous 
		Kantorovich  plan ${\cal K}_t$ with ``communication cost'' $\W_2^2 ( R_t,D_t )$. $R_t$ is transported by the 
		(control) velocity field $V_t$ to track $D_t$,  
		optimally trading off the assignment cost and motion cost. 
	} 
	\label{pict_model}
\end{figure}

The above problem can be simplified since the marginalization constraint~\req{opt1_marg} is a static-in-time rather than a dynamic constraint. 
Indeed, infimization over $\K$ yields
\begin{equation}
	\begin{split}
		&\inf_{R,V,\K} \,
		\smint{0}{T}   \lp
		C_\rma \left( \K_t \right)  +\alpha^2 C_\rmm \left( R_t,V_t \right)    	\rule{0em}{1em} \rp 	dt		  \\
		& \qquad =~ \inf_{R,V}  \, \smint{0}{T}  \lp  \W_2^2(R_t,D_t) +\alpha^2 C_\rmm \left( R_t,V_t \right)  \rule{0em}{1em}  \rp dt ,
	\end{split}
\end{equation}
since we have chosen $C_\rma$ as~\req{Ca_def}, which has optimal value \eqref{Kb_opt.eq} at each time $t$.
We therefore can restate Problem~\ref{gen_model} as a more standard optimal control problem as follows.

\begin{problem}\label{spec_model}
	Given an initial resource distribution $R_0$ and a time-varying demand distribution $D$, solve
	\begin{equation}
		\begin{aligned} 
			\inf_{R,V} \, \smint{0}{T}  & \lp \W_2^2 \left( R_t,D_t \right)  + \alpha^2 \! \smint{\Omega}{} \left\| V_t(x) \right\|_2^2 R_t(x) \, dx  \rp dt  \\
			\mbox{s.t.} \quad
			&\partial_t R_t(x) = -\nabla \cdot \left( R_t(x) \, V_t(x) \right) .
		\end{aligned}
		\label{opt2.eq}
	\end{equation}
\end{problem}

Since the quantity $\W_2^2(R_t,D_t)$ is a measure of distance between the state $R_t$ and the signal $D_t$, the first term in the objective can be interpreted as a tracking error. The second term can be interpreted as a control energy, which in this context is the physical energy of moving the resource agents. With these interpretations, the optimal control problem~\req{opt2} is an infinite-dimensional, nonlinear, tracking control (or ``servo-mechanism'') problem~\cite{Sage1977}.

The necessary conditions for optimality for this problem can be written as follows:
\begin{align}
	\partial_t R_t &= -\nabla \cdot \left( R_t \, \nabla \lambda_t \right),		&&  \hspace{-3mm} R_0 ~ \text{given} , \label{necc_cond_1} \\
	\partial_t \lambda_t &= - \half \| \nabla \lambda_t \|_2^2 + \tfrac{1}{2 \alpha^2} \tfrac{\delta}{\delta R_t} \W_2^2(R_t,D_t), && \lambda_T = 0 . \label{necc_cond_2}
\end{align}
Here, \eqref{necc_cond_1} is the original dynamics with $V = \nabla \lambda$, while \eqref{necc_cond_2} is the costate equation describing the evolution of the costate (i.e. Lagrange multiplier) $\lambda: \Omega \times [0,T] \to \R$. The term $\tfrac{\delta}{\delta R_t}$ denotes the first variation with respect to $R_t$. We do not include the derivation here because it is not important for our development. Rather, we only wish to emphasize the challenge associated with a direct numerical solution.

The system \eqref{necc_cond_1}-\eqref{necc_cond_2}  forms a nonlinear \emph{two-point boundary value problem}, i.e., the equations for $R$ and $\lambda$ are nonlinearly coupled, with $R$ specified at $t=0$ and $\lambda$ specified at $t=T$. Such systems do not admit solution by direct numerical integration (as an initial value problem would), and typically need to be solved iteratively. Furthermore, in dimensions 2 and higher, the term $\tfrac{\delta}{\delta R_t} \W_2^2(R_t,D_t)$ must itself be found as the solution of an optimization problem. Consider that this term must be computed at each point in time on every iteration, and it is seen why a direct numerical solution is challenging.

Despite the challenge of the general case, in one dimension, the optimal assignment is very structured. This allows not only for simpler numerical methods, but in fact for explicit solutions, which will be the focus of Section \ref{1D}.

\subsection*{Comparison with Dynamic Optimal Transport}

Before moving on, we find it useful to compare Problem \ref{spec_model} posed above with that of \emph{dynamic optimal transport}, originally posed by Benamou and Brenier in \cite{Benamou2000}. This latter problem is written (using our notation) as follows: given an initial state $R_0$ and a \emph{fixed} demand $D$, solve
\begin{equation} \label{dynamic_ot_eq}
	\begin{aligned} 
		\inf_{R,V} \, \smint{0}{1}  & \smint{\Omega}{} \left\| V_t(x) \right\|_2^2 R_t(x) \, dx \, dt  \\
		\text{s.t.} \quad &\partial_t R_t(x) = -\nabla \cdot \left( R_t(x) \, V_t(x) \right) , \\
		& R_T = D .
	\end{aligned}
\end{equation}

Indeed, our Problem 2 bears a close resemblance to \eqref{dynamic_ot_eq}: our motion cost is exactly this objective, and the dynamics for the two problems are identical. Furthermore, the value of \eqref{dynamic_ot_eq} is known to be the squared 2-Wasserstein distance $\W_2^2(R_0,D)$, which tells us that our particular choice for $C_\rma$ and $C_\rmm$ in Problem \ref{spec_model} is very special. That said, the two problems differ in one important respect: while the dynamic optimal transport problem \eqref{dynamic_ot_eq} has a hard constraint on the final state $R_T = D$, our Problem \ref{spec_model} has a penalty on the distance $\W_2^2(R_t,D_t)$ which is integrated over time. This has several consequences. First, it allows Problem \ref{spec_model} to accommodate time-varying demands, while the demand in \eqref{dynamic_ot_eq} must be fixed. Second, it allows Problem \ref{spec_model} to accommodate demands which may not strictly be reachable (e.g. continuous demands when the resource is discrete), while the demand in \eqref{dynamic_ot_eq} must be reachable for solutions to exist. Third, it incentivizes the resource in Problem \ref{spec_model} to be close to the demand at all times, not just at the final time $T$ as in \eqref{dynamic_ot_eq}. This can be summarized by saying that while the dynamic optimal transport problem \eqref{dynamic_ot_eq} is an optimal state-transfer control problem, Problem \ref{spec_model} is an optimal tracking control problem. Thus, while closely related, these two problems have a distinctly different character.

This also has consequences for numerical methods, which can be seen by comparing the necessary conditions for optimality of Problem \ref{spec_model} with those of \eqref{dynamic_ot_eq}, written as follows:
\begin{align}
	\partial_t R_t &= -\nabla \cdot \left( R_t \, \nabla \lambda_t \right),		&& R_0, R_T ~ \text{given} \label{dynamic_necc_cond_1} \\
	\partial_t \lambda_t &= - \half \| \nabla \lambda_t \|_2^2 . \label{dynamic_necc_cond_2}
\end{align}
While again bearing a close resemblance to \eqref{necc_cond_1}-\eqref{necc_cond_2}, there are important differences. First, unlike \eqref{necc_cond_1}-\eqref{necc_cond_2}, the equations \eqref{dynamic_necc_cond_1}-\eqref{dynamic_necc_cond_2} are only coupled one way: $\lambda$ affects the evolution of $R$, but not vice versa. Second, there is no term $\tfrac{\delta}{\delta R_t} \W_2^2$ in \eqref{dynamic_necc_cond_1}-\eqref{dynamic_necc_cond_2} requiring the solution of an optimization problem. This allows \eqref{dynamic_necc_cond_1}-\eqref{dynamic_necc_cond_2} to be solved in a relatively straightforward manner compared to \eqref{necc_cond_1}-\eqref{necc_cond_2}.


\section{Solution to the One-Dimensional Case} \label{1D}

In this section, we solve our proposed model in the special case where the spatial domain is one-dimensional. This is done in two main steps. First, in Section \ref{transformation_section}, we show how the problem~\req{opt2} can be transformed into an equivalent infinite-dimensional LQ tracking problem. Second, in Section \ref{decoupling_section}, we show how this infinite-dimensional LQ tracking problem can be decomposed into an infinite number of uncoupled scalar LQ tracking problems. The scalar LQ tracking problem has a well-established solution, which is reviewed in Section \ref{lqprob}. In Section \ref{solution_section}, we transform these solutions back to our original problem setting to obtain optimal controls for the original problem.

\subsection{Transformation using Quantile Functions} \label{transformation_section}

In this section, we show how the problem \eqref{opt2.eq} can be transformed into an equivalent infinite-dimensional LQ tracking problem when formulated in terms of the \emph{quantile functions}\footnote{This transformation to quantile functions is well-known in the optimal transport literature for the 1D optimal transport problem, and arises naturally in the construction of the monotone transport plan (see, e.g., \cite[Chapter 2]{Santambrogio2015}). It happens that this transformation -- which ``decouples'' the classical optimal transport problem -- decouples our problem as well.} of the densities $R$ and $D$.
\begin{defn}
	Let $\mu$ be a density in $\mathfrak{D}(\Omega)$ with $\Omega \subset \R$. The {\em cumulative distribution function (CDF)} $F_\mu:\Omega\rightarrow[0,1]$ and {\em quantile function} $Q_\mu:[0,1]\rightarrow\Omega$ of $\mu$ are defined by
	\begin{align}
		F_\mu(x) &~:=~ \smint{-\infty}{x} \mu(\xi)\,d\xi	,				\label{cdf_def}				\\
		Q_\mu(z) &~:=~  \inf \{ x : F_\mu(x) \geq z \}. 							\label{quantile_def}
	\end{align}
\end{defn}
We recall the following facts about $F_\mu$ and $Q_\mu$ \cite{Kaempke2015}:

\begin{enumerate}
	\item The associations between $\mu, ~ F_\mu, ~ \text{and} ~ Q_\mu$ are 1-1,
	\item $F_\mu$ and $Q_\mu$ are pseudoinverses on $\Omega$ and $[0,1]$,
	\item $F_\mu$ and $Q_\mu$ are both monotone nondecreasing,
	\item $F_\mu$ can be recovered from $Q_\mu$ by the relation
	\begin{equation}
		F_\mu(x) ~=~ \sup \, \{ z : Q_\mu(z) \leq x \} .
		\label{CDF_from_Q.eq}
	\end{equation}
\end{enumerate}

In one dimension, the optimal assignment is monotone, which allows one to express the 2-Wasserstein distance in closed form in terms of quantile functions as follows.

\begin{lem}[{\cite[Proposition~2.17]{Santambrogio2015}}] \label{wass_lem} 
	Let  $\mu$ and $\nu$ be any two densities in $\D(\Omega)$ with $\Omega \subset \R$. The 2-Wasserstein distance between $\mu$ and $\nu$ is given by
	\begin{equation} \label{wass.eq}
		\W_2^2(\mu,\nu) ~=~ \smint{0}{1} (Q_{\nu}(z) - Q_{\mu}(z))^2\,dz
	\end{equation}
	where  $Q_{\mu},\,Q_{\nu}: [0,1] \rightarrow \Omega$ are the quantile functions of the densities $\mu$ and $\nu$ respectively.
\end{lem}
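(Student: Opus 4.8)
The plan is to identify the optimal assignment plan explicitly as the \emph{comonotone coupling} of $\mu$ and $\nu$, and then evaluate its cost by a change of variables. Since $\mu,\nu\in\D(\Omega)$ are normalized, I would regard any admissible plan $\K$ with $\marg\K=(\mu,\nu)$ as the joint law of a pair of real random variables $(X,Y)$ with $X\sim\mu$ and $Y\sim\nu$; because $\Omega$ is compact, all second moments below are finite. Expanding the quadratic cost,
\begin{equation*}
\int_{\Omega\times\Omega} (y-x)^2\,\K(x,y)\,dx\,dy ~=~ \int_\Omega x^2\mu(x)\,dx ~+~ \int_\Omega y^2\nu(y)\,dy ~-~ 2\int_{\Omega\times\Omega} xy\,\K(x,y)\,dx\,dy ,
\end{equation*}
and since the first two terms depend only on the marginals, minimizing $\W_2^2(\mu,\nu)$ over $\K$ in the sense of \eqref{Kb_opt.eq} is equivalent to \emph{maximizing} the correlation $\int xy\,\K$.

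Next I would introduce the candidate plan $\K_\star$, defined as the law of $\big(Q_\mu(U),Q_\nu(U)\big)$ with $U$ uniform on $[0,1]$ — i.e. the pushforward of Lebesgue measure on $[0,1]$ under the map $z\mapsto(Q_\mu(z),Q_\nu(z))$. By facts (1)--(3) on quantile functions, $Q_\mu(U)\sim\mu$ and $Q_\nu(U)\sim\nu$, so $\K_\star$ is admissible, and the change of variables $x=Q_\mu(z)$, $y=Q_\nu(z)$ with $z$ uniform yields immediately
\begin{equation*}
\int_{\Omega\times\Omega} (y-x)^2\,\K_\star(x,y)\,dx\,dy ~=~ \int_0^1 \big(Q_\nu(z)-Q_\mu(z)\big)^2\,dz ,
\end{equation*}
which is precisely the claimed right-hand side. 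It therefore remains to show that $\K_\star$ is optimal, i.e. that $\int xy\,\K \leq \int_0^1 Q_\mu(z)Q_\nu(z)\,dz$ for every admissible $\K$.

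For this optimality step I would use the Hoeffding covariance identity: writing $H(s,t):=\K\big((-\infty,s]\times(-\infty,t]\big)$ for the bivariate CDF associated with an admissible plan, one has $\int xy\,\K - \big(\int x\mu\big)\big(\int y\nu\big) = \int_\R\int_\R \big(H(s,t)-F_\mu(s)F_\nu(t)\big)\,ds\,dt$. Every such $H$ obeys the Fr\'echet--Hoeffding bound $H(s,t)\leq\min\{F_\mu(s),F_\nu(t)\}$, while the comonotone plan attains it, since $\{Q_\mu(U)\leq s,\ Q_\nu(U)\leq t\}=\{U\leq\min(F_\mu(s),F_\nu(t))\}$ up to a Lebesgue-null set — this last point is exactly where relation \eqref{CDF_from_Q.eq} enters, as it gives $\{Q_\mu(U)\leq s\}=\{U\leq F_\mu(s)\}$ a.e. Hence $\K_\star$ maximizes the correlation, and combining with the expansion above proves the lemma. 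I expect the optimality step to be the main obstacle: one must justify the covariance identity and the Fr\'echet--Hoeffding inequality carefully for general (possibly atomic) $\mu,\nu$. An alternative, fully transport-theoretic route is to invoke Kantorovich duality and the fact that the support of any optimal plan for the cost $c(x,y)=(x-y)^2$ is cyclically monotone, hence contained in the graph of a nondecreasing map, which necessarily agrees $\mu$-a.e. with $Q_\nu\circ F_\mu$ — recovering $\K_\star$ once more.
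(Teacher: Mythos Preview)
Your proof is correct and follows a standard route (Hoeffding's covariance identity plus the Fr\'echet--Hoeffding upper bound to identify the comonotone coupling as optimal). However, there is nothing to compare against: the paper does not prove this lemma at all. It is stated with a direct citation to \cite[Proposition~2.17]{Santambrogio2015} and used as a black box throughout; no argument is given in the paper itself. So your proposal supplies a self-contained proof where the paper simply defers to the reference.
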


In other words, the bijection $\mu \leftrightarrow Q_\mu$ is an isometry with respect to the metrics $\W_2$, $L^2$. This will become important.

The simplicity of the above form for $\W_2$ compared to the linear program formulation~\req{Kb_opt} suggests that this transformation to quantile functions may be useful for our problem as well, and motivates the reformulation of Problem \ref{spec_model} entirely in terms of quantile functions. We first address the transformation of the dynamics and then the transformation of the objective.

\subsubsection{Transformation of the Dynamics}

Here, we show\footnote{Note that results similar to Lemmas \ref{equivalent_dynamics} and \ref{equivalent_constraints_lem} exist throughout the optimal transport literature. We carry out the derivation in our slightly altered setting in order to keep the paper self-contained -- we do not claim their originality.} that the ``bilinear dynamics'' of $R$~\req{advec} can be transformed into {\em linear additive} dynamics of the quantile function $\QR$ plus some additional constraints. The first step is as follows.

\begin{lem} \label{equivalent_dynamics}
	A time-varying density $R$ evolves according to the one-dimensional continuity equation
	\be
	\partial_t  R(x,t) ~=~ -\partial_x \big( V(x,t) ~R(x,t) 	\rom \big) 									\label{R_PF.eq} 
	\ee
	if and only if its CDF $\FR$ and its quantile function $\QR$ evolve according to the equivalent dynamics
	\begin{align} 
		\partial_t \FR(x,t) &~=~ - V(x,t) ~\partial_x \FR(x,t)											\label{F_PF.eq}			\\ 
		\partial_t \QR(z,t) &~=~ V \!\lp \QR(z,t) , t \romn \rp		.									\label{Q_PF.eq}
	\end{align} 	
\end{lem}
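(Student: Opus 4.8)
The plan is to prove the lemma as a chain of two equivalences: the one-dimensional transport equation \req{R_PF} for $R$ is equivalent to the CDF equation \req{F_PF}, and \req{F_PF} is in turn equivalent to the quantile equation \req{Q_PF}. The bridge in both cases is the elementary identity $\partial_x \FR(x,t) = R(x,t)$, which is immediate from the definition of $\FR$ as the spatial antiderivative of the density $R$.

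For the first equivalence I would integrate \req{R_PF} in the spatial variable from $\inf(\Omega)$ to $x$. Differentiation under the integral turns the left-hand side into $\partial_t \FR(x,t)$, while the fundamental theorem of calculus turns the right-hand side into $-V(x,t)R(x,t) + V(\inf(\Omega),t)R(\inf(\Omega),t)$; the boundary term vanishes under the no-flux boundary condition on $\partial\Omega$, which is exactly the condition that makes the total mass $\smint{\Omega}{}R_t\,dx$ constant in time as assumed in Section~\ref{prob_formulation}. Substituting $\partial_x\FR = R$ then gives \req{F_PF}. Conversely, differentiating \req{F_PF} in $x$ and again using $\partial_x \FR = R$ recovers \req{R_PF}; each step is reversible, so the two PDEs are equivalent.

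For the second equivalence I would differentiate the pseudoinverse identity $\FR(\QR(z,t),t) = z$ in $t$. The chain rule gives
\[
\big(\partial_x\FR\big)\!\big(\QR(z,t),t\big)\,\partial_t\QR(z,t) ~+~ \big(\partial_t\FR\big)\!\big(\QR(z,t),t\big) ~=~ 0 .
\]
Using $\partial_x\FR = R$ and the CDF equation $\partial_t\FR = -V\,\partial_x\FR = -VR$, the common factor $R(\QR(z,t),t)$ cancels and we are left with $\partial_t\QR(z,t) = V(\QR(z,t),t)$, which is \req{Q_PF}. For the reverse direction one can run the same implicit-differentiation argument starting from $\QR(\FR(x,t),t) = x$, or simply note that \req{Q_PF} is the characteristic ODE of \req{R_PF}: it says each ``mass particle'' is advected by $V$, which is the standard Lagrangian/Eulerian correspondence for the continuity equation.

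The main obstacle is regularity rather than the algebra. When $R_t$ has atoms, $\FR$ has jumps and $\QR$ has flat pieces; when $R_t$ vanishes on an interval, $\FR$ is flat and $\QR$ jumps. In these cases the identity $\FR(\QR(z,t),t) = z$ holds only for almost every level $z$, $\partial_x \FR$ must be read as $R$ in the sense of generalized functions, and $\QR$ need not be differentiable in $t$ at the exceptional levels. I would handle this by first carrying out the computation for densities that are absolutely continuous and strictly positive on $\Omega$, where $\FR$ and $\QR$ are genuine $C^1$ mutual inverses, and then extending to general elements of $\D(\Omega)$ either by approximation or by interpreting all derivatives weakly and appealing to the fact that the flow of $V$ transports $R$, $\FR$, and $\QR$ consistently. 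At the level of rigor used elsewhere in the paper, the formal calculation above together with a remark on the degenerate cases should be enough.
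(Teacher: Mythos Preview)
Your argument is correct. The first equivalence, \req{R_PF}$\Leftrightarrow$\req{F_PF}, is handled exactly as in the paper: integrate the transport equation in $x$ and use $\partial_x\FR=R$.

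For the second equivalence your route differs from the paper's. You implicitly differentiate the inverse identity $\FR(\QR(z,t),t)=z$ and cancel the common factor $R(\QR(z,t),t)$; the paper instead works through the method of characteristics: it introduces the flow map $\Phi_t$ of $V$, uses the representation $\FR(\Phi_t(x),t)=\FR(x,0)$ of solutions to the advection equation, applies $\QR(\cdot,t)$ to both sides to get $\QR(z,t)=\Phi_t(x)$, and then differentiates in $t$ using the flow-map ODE. For the converse the paper writes $R_t=[\Phi_t]_\# R_0$ via Lemma~\ref{Q_pushfw_lem} and invokes \cite[Theorem~4.4]{Santambrogio2015} to conclude that this pushforward satisfies the transport equation. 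Your approach is shorter and more elementary, but it forces you to divide by $R(\QR(z,t),t)$, which is exactly where the atoms-and-gaps regularity issue you flag appears; you would need to argue separately that the identity survives on the exceptional set. The paper's characteristics argument sidesteps that division entirely, handles the reverse implication via a clean measure-theoretic citation rather than a second implicit differentiation, and has the side benefit of introducing the flow map $\Phi_t$, which is reused later in the proofs of Lemma~\ref{equivalent_constraints_lem} and Theorem~\ref{static_soln}.
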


\begin{proof} 
	See Appendix \ref{equivalent_dynamics_appen}.
\end{proof}

The above proposition can be understood as follows. Mass densities $R$ evolving under a velocity field $V$ obey the continuity equation~\req{R_PF}. By integration, their respective CDFs $\FR$ evolve according to the advection equation~\req{F_PF}. Using the fact that $\QR$ and $\FR$ are inverses, we differentiate the relation
\begin{equation}
	\QR(\FR(x,t),t) ~=~ x
\end{equation}
with respect to time and space to deduce the dynamics \eqref{Q_PF.eq} for $\QR$. This argument works when all quantities are differentiable. The details in the general case are given in Appendix \ref{equivalent_dynamics_appen}.

Next, we reparameterize the velocity field by defining
\be
U(z,t) := V\lB \QR(z,t),t  \romn\rB 
\hstm \Leftrightarrow \hstm 
U := V \circ \QR.
\ee
In the language of differential geometry, $U$ is the \emph{pullback} of $V$ to the domain $[0,1]$ by the quantile function $\QR$.
The dynamics~\req{Q_PF} of quantiles then take on the simple form
\begin{equation} \label{linear_dynamics}
	\partial_t \QR(z,t) ~=~ U(z,t),
\end{equation}
provided that we are able to recover $V$ from $U$ by 
\be
V ~=~ \lB V\circ \QR \rB \circ Q_{\scriptscriptstyle R}^{\sm1}  ~=~ U \circ \FR. 
\ee
This imposes constraints on $U$ which we now state precisely.

\begin{figure}[t]
	\centering
	\includegraphics[width=0.9\linewidth]{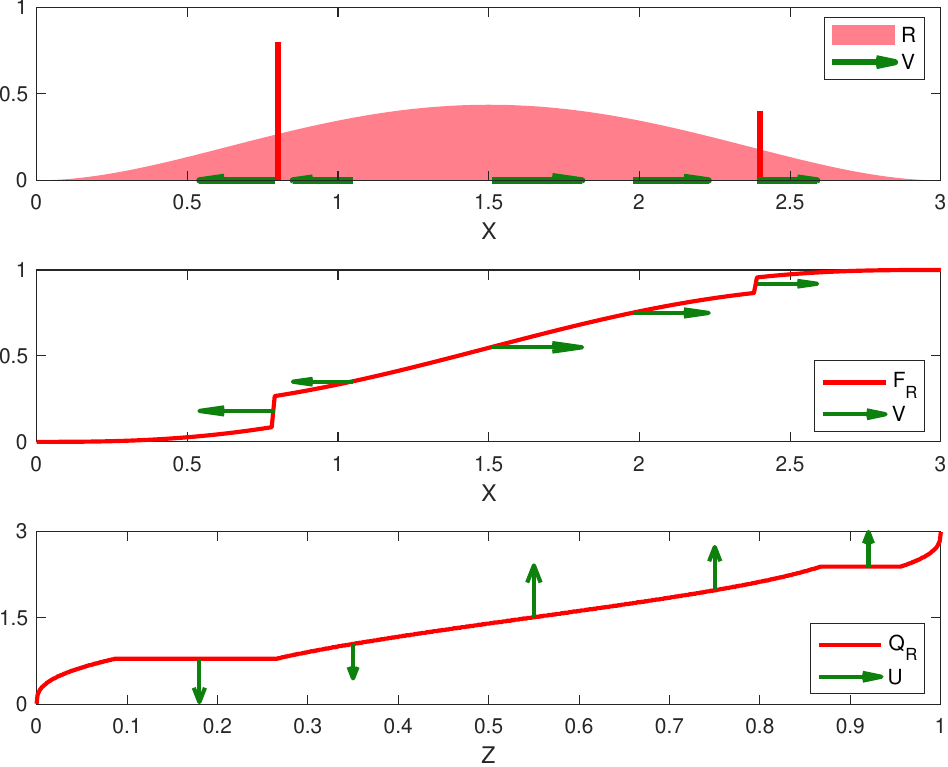}
	\mycaption{(\emph{Top}) A density $R$ is transported by a velocity field $V$ according to the continuity equation~\req{R_PF}. Equivalently, its CDF $\FR$ (\emph{middle}) is advected by $V$ according to the advection equation~\req{F_PF}. 
		(\emph{Bottom}) The corresponding quantile function $\QR(z,t)$ evolves independently at each $z$ 
		with a derivative of $U(z,t) := V \!\lp \QR(z,t), t \romn\rp$.
		Notice that Dirac masses in $R$ correspond to regions where $\QR$ is constant. Since Dirac masses must move with a single velocity, constant regions in $\QR$ must move with a single velocity as well.
	}
	\label{trans_graphic}
\end{figure}

\begin{lem} \label{equivalent_constraints_lem}
	The following two sets are in 1-1 correspondence:
	\begin{equation*}
		\begin{array}{rcl}
			(R,V) & \text{s.t.} &
			\begin{cases}
				\partial_t R = - \nabla \cdot (V R)
			\end{cases} ,
			\vspace{2mm} \\
			(\QR,U) & \text{s.t.} &
			\begin{cases}
				\partial_t \QR = U , \\
				z_1 \leq z_2 ~ \Rightarrow ~ \QR(z_1,0) \leq \QR(z_2,0), \\
				\QR(z_1,t) = \QR(z_2,t) \\
				\qquad \Rightarrow ~ U(z_1,t) = U(z_2,t)
			\end{cases} .
		\end{array}
	\end{equation*}
	The sets are related by the bijective transformation
	\be
	\begin{aligned}
		\mathcal{T}&: (R,V) \mapsto (\QR,V \of \QR) = (\QR,U) \\
		\mathcal{T}^{\sm1}&: (\QR,U) \mapsto (R,U \of \FR) = (R,V) .
	\end{aligned}
	\label{transformation_T.eq}
	\ee
\end{lem}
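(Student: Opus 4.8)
The plan is to verify three things: that the map $\mathcal{T}$ of \eqref{transformation_T.eq} sends the first set into the second, that $\mathcal{T}^{\sm1}$ sends the second set into the first, and that the two maps are mutually inverse. Almost all of the substance lies in showing $\mathcal{T}^{\sm1}$ is well defined.

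For the forward direction, take $(R,V)$ with $\partial_t R = -\partial_x(VR)$. By Lemma~\ref{equivalent_dynamics} its quantile function obeys $\partial_t \QR(z,t) = V(\QR(z,t),t)$, which is exactly $\partial_t \QR = U$ once we substitute $U := V \of \QR$. Monotonicity of $\QR(\cdot,0)$ holds because quantile functions are always monotone nondecreasing (recalled fact~3). The consistency constraint is automatic: if $\QR(z_1,t) = \QR(z_2,t)$ then $U(z_1,t)$ and $U(z_2,t)$ are both $V$ evaluated at the same point, hence equal. So $\mathcal{T}(R,V)$ lies in the second set.

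For the inverse direction, take $(\QR,U)$ satisfying the three constraints. The key point --- and what I expect to be the main obstacle --- is to show that $\QR(\cdot,t)$ is still monotone nondecreasing, hence a genuine quantile function, for every $t \in [0,T]$ and not merely at $t=0$. I would argue this by non-crossing of the characteristic curves $t \mapsto \QR(z,t)$ of $\partial_t \QR = U$: if two curves with $z_1 < z_2$ were to first meet at some time $t^\ast$, then $\QR(z_1,t^\ast) = \QR(z_2,t^\ast)$, so the consistency constraint forces $U(z_1,t^\ast) = U(z_2,t^\ast)$, i.e.\ the two curves have equal velocities at the meeting point; a comparison/uniqueness argument then keeps them coincident (in particular, ordered) for $t \ge t^\ast$ rather than allowing a crossing. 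Granting this, at each time there is a unique density $R_t \in \D(\Omega)$ with $Q_{R_t} = \QR(\cdot,t)$ (recalled fact~1), and we set $V := U \of \FR$. The consistency constraint is precisely what makes $V$ well defined across atoms of $R_t$ --- flat intervals of $\QR$, where $\FR \of \QR$ fails to be the identity --- and it delivers the identity $V \of \QR = U$. Then $\partial_t \QR = U = V \of \QR$ is exactly \eqref{Q_PF.eq}, so Lemma~\ref{equivalent_dynamics} read in reverse yields $\partial_t R = -\partial_x(VR)$, placing $\mathcal{T}^{\sm1}(\QR,U)$ in the first set.

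Finally, to see that the two maps are mutually inverse, I would compose them and use the CDF/quantile pseudoinverse relations (recalled facts~2 and~4). For $\mathcal{T}^{\sm1}\of\mathcal{T}$: from $(R,V)$ we recover the density whose quantile function is $\QR$, which is $R$ by fact~1, together with the velocity $(V\of\QR)\of\FR$, which equals $V$ on the support of $R$ since $\QR\of\FR$ is the identity there. For $\mathcal{T}\of\mathcal{T}^{\sm1}$: from $(\QR,U)$ we recover $\QR$ together with $(U\of\FR)\of\QR = U\of(\FR\of\QR) = U$, the last equality being the same consistency computation as above. This establishes the bijection. The only genuinely delicate step is the non-crossing argument; everything else is routine bookkeeping with pseudoinverses, up to the standard convention for assigning velocities on zero-density regions, which is irrelevant to the dynamics.
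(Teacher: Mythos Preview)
Your proposal is correct and follows essentially the same route as the paper: the same three-part structure (forward well-definedness, inverse well-definedness via a monotonicity-preservation/non-crossing argument using the input constraint, and mutual inversion via the pseudoinverse identities \eqref{q_f_inverses} and \eqref{f_of_q}), with the non-crossing step handled by the same intermediate-value-then-coincidence reasoning the paper uses. The only cosmetic difference is that the paper phrases the monotonicity step as a proof by contradiction rather than as a forward non-crossing argument, but the content is identical.
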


\begin{proof}
	See Appendix \ref{equivalent_constraints_appen}.
\end{proof}

The constraints on $(\QR,U)$ can be interpreted as follows. The first constraint -- dubbed the \emph{dynamic constraint} -- is just the equivalent dynamics. The second constraint -- dubbed the \emph{monotonicity constraint} -- ensures that the initial state $\QR(\cdot,0)$ is indeed an actual quantile function with a corresponding density $R$. The third constraint -- dubbed the \emph{input constraint} -- ensures that particles that are at the same spatial location move with the same velocity. Recall that this constraint was implicit in the original formulation of the dynamics \eqref{advec.eq}. Together, the second and third constraints imply that $\QR$ remains an actual quantile function for all time, ensuring that the transformations $\mathcal{T}$ and $\mathcal{T}^{-1}$ remain well-defined. The precise details can be found in Appendix \ref{equivalent_constraints_appen}.
Figure~\ref{trans_graphic} illustrates Lemmas~\ref{equivalent_dynamics} and~\ref{equivalent_constraints_lem}.

\subsubsection{Transformation of the Objective}

To complete the statement of the transformed problem, then, we need to rewrite the objective function in terms of $\QR$ and $U$. Lemma \ref{wass_lem} gives us an expression for the assignment cost, but we still need an expression for the motion cost. This is found using a fact about $\QR$ involving the {\em measure pushforward}.

\begin{defn}[Pushforward]
	Let $\mu$ be a density on $\Omega_1$ and $f:\Omega_1\rightarrow\Omega_2$ be a measurable function. 
	The {\em pushforward of $\mu$ through $f$} is a density $f_\# \mu$ on $\Omega_2$ with the property
	\begin{equation} \label{pushfw_eqn}
		\smint{\Omega_2}{} \psi(y) \, \big( f_\# \mu \big) (y) \, dy ~=~ \smint{\Omega_1}{} \big( \psi \of f \big) (x) \, \mu(x) \, dx
	\end{equation}
	for all measurable functions $\psi: \Omega_2 \to \R$.
\end{defn}

The pushforward can be conceptualized as the density formed by ``moving the mass in $\mu$ according to $f$''. That the pushforward is well-defined can be found in any standard text on measure theory, e.g. \cite[Section 3.6]{Bogachev2007}. The pushforward is used in the following fact about quantile functions.

\begin{lem}[{\cite[Proposition~2.2]{Santambrogio2015}}] \label{Q_pushfw_lem}
	Let $\mu$ be a density over $\Omega \subset \R$ and $Q_\mu$ be its quantile function. Then $\mu ~=~ [Q_\mu]_{\sss\!\#} \bone$, where $\bone$ is the uniform density over the unit interval $[0,1]$.
\end{lem}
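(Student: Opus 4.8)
The plan is to check the defining property of the pushforward directly and then appeal to the fact that a density in $\D(\Omega)$ is uniquely determined by its CDF. Unwinding~\eqref{pushfw_eqn} with $\Omega_1=[0,1]$, $f=Q_\mu$, and the uniform density $\bone$ in the role of the source density, the identity $\mu = [Q_\mu]_{\sss\!\#}\bone$ amounts to
\[
  \smint{\Omega}{} \psi(y)\,\mu(y)\,dy ~=~ \smint{0}{1} \psi\big(Q_\mu(z)\big)\,dz
  \qquad\text{for every measurable } \psi:\Omega\to\R ,
\]
so it suffices to verify this. Equivalently, since both $\mu$ and $[Q_\mu]_{\sss\!\#}\bone$ are densities, it is enough to show that the two have the same CDF.

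First I would take $\psi = \bone_{(-\infty,x]}$ for an arbitrary $x\in\Omega$. Then the left-hand side is $F_\mu(x)$ by definition~\eqref{cdf_def}, while $\psi\big(Q_\mu(z)\big)=\bone[\,Q_\mu(z)\le x\,]$, so the right-hand side equals the Lebesgue measure of $S_x := \{\,z\in[0,1] : Q_\mu(z)\le x\,\}$. Because $Q_\mu$ is monotone nondecreasing (fact~3), $S_x$ is an interval containing $0$, and by the recovery formula~\eqref{CDF_from_Q.eq} its supremum is precisely $F_\mu(x)$; hence $S_x$ is either $[0,F_\mu(x)]$ or $[0,F_\mu(x))$, and in both cases $|S_x| = F_\mu(x)$. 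This shows the CDF of $[Q_\mu]_{\sss\!\#}\bone$ agrees with $F_\mu$ at every $x$, and then the one-to-one correspondence between densities and CDFs (fact~1) gives $[Q_\mu]_{\sss\!\#}\bone = \mu$.

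To also obtain the displayed integral identity for all $\psi$ (rather than just indicators of half-lines) one runs the standard machine: differences of half-line indicators give indicators of intervals, linearity gives simple functions, monotone convergence gives nonnegative measurable functions, and splitting into positive and negative parts gives the general case. The only point that needs genuine care is the presence of atoms in $\mu$: a Dirac mass makes $F_\mu$ jump and makes $Q_\mu$ constant on an interval, so one must confirm that the ambiguity in $S_x$ --- namely whether the single point $F_\mu(x)$ lies in $S_x$ --- never affects Lebesgue measure. Tracking this comes down to the pseudoinverse relationship between $F_\mu$ and $Q_\mu$ (fact~2) together with the $\inf$/$\sup$ conventions in~\eqref{quantile_def} and~\eqref{CDF_from_Q.eq}, and this small bit of bookkeeping is the main (and essentially only) obstacle; everything else is routine.
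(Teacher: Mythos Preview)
The paper does not actually supply a proof of this lemma; it is stated with a citation to \cite[Proposition~2.2]{Santambrogio2015} and used as a black box. Your argument---showing that the CDF of $[Q_\mu]_{\sss\!\#}\bone$ agrees with $F_\mu$ via the identity $F_\mu(x)=\sup\{z:Q_\mu(z)\le x\}$ from~\eqref{CDF_from_Q.eq}, and then invoking the bijection between densities and CDFs---is correct and is essentially the standard route taken in the cited reference as well.
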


Using this result, we can rewrite the motion cost as follows
\begin{align}
	&~\smint{\Omega}{} \big\|V(x,t) \big\|^2 \, R_t(x) \, dx 
	= \smint{\Omega}{} \big\| V(x,t) \big\|^2 \, \big( \QR(\cdot,t)_{\sss\!\#} \bone \big) (x) \, dx \nonumber \\
	&=\!\smint{0}{1} \big\| V(\cdot,t) \of \QR(z,t) \big\|^2 \, \bone (z) \, dz = \smint{0}{1} \big\| U(z,t) \big\|^2 \, dz . 
	\label{motion_cost_eq}
\end{align}
Remarkably, the state-dependent weighting term has disappeared. Putting all these pieces together, we can now write our original problem in an equivalent form as follows.

\begin{prop} \label{trans_mod_equiv}
	In one spatial dimension, Problem \ref{spec_model} is equivalent to the following: given an initial resource distribution $R_0$ and a time-varying demand distribution $D$, solve
	\begin{align}
		&\hspace{5em}
		\mathclap{\inf_{\QR,U} \! \smint{0}{T} \!\!\smint{0}{1}  
			\Big( \big(  \QR(z,t) -\QD(z,t) \big)^2\! \text{+}\, \alpha^2 U^2(z,t) \Big) dz\,dt 		}
		\label{trans_mod_one.eq}		 \\
		\text{s.t.} ~~ \partial_t \QR(z,t) &= U(z,t)	,				\label{trans_mod_two.eq}		\\
		\QR(z_1,t) &= \QR(z_2,t) ~ \Rightarrow ~ U(z_1,t) = U(z_2,t) ,		\label{trans_mod_three.eq}	
	\end{align}
	with initial condition $\QR(\cdot,0)$. The solutions to the two problems are related by the transformations~\req{transformation_T}, and the costs attained for each problem are identical.
\end{prop}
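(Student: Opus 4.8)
The plan is to assemble the ingredients already established --- the bijection between feasible pairs in Lemma~\ref{equivalent_constraints_lem}, the isometry of Lemma~\ref{wass_lem}, and the pushforward computation~\eqref{motion_cost_eq} --- and check that the transformation $\mathcal{T}$ of~\eqref{transformation_T.eq} simultaneously matches up constraints and costs. First I would note that, under $\mathcal{T}$, fixing the initial resource distribution $R_0$ in Problem~\ref{spec_model} is the same as fixing the initial quantile function $\QR(\cdot,0) = Q_{R_0}$ in the transformed problem. Since $Q_{R_0}$ is a genuine quantile function it is monotone nondecreasing, so the monotonicity constraint of Lemma~\ref{equivalent_constraints_lem} is automatically satisfied at $t=0$ and hence, by that lemma, the dynamic and input constraints propagate it to all $t\in[0,T]$. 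Therefore the feasible set of the transformed problem --- pairs $(\QR,U)$ satisfying~\eqref{trans_mod_two.eq}, \eqref{trans_mod_three.eq}, and $\QR(\cdot,0)=Q_{R_0}$ --- coincides with the image under $\mathcal{T}$ of the feasible set of Problem~\ref{spec_model}.

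Next I would verify that $\mathcal{T}$ is cost preserving. Fix a feasible pair $(R,V)$ for Problem~\ref{spec_model} and set $(\QR,U)=\mathcal{T}(R,V)$. For each $t$, Lemma~\ref{wass_lem} gives $\W_2^2(R_t,D_t)=\int_0^1(\QD(z,t)-\QR(z,t))^2\,dz$, the tracking term of~\eqref{trans_mod_one.eq}; note $R_t\in\D(\Omega)$ for all $t$ by mass conservation along~\eqref{trans_mod_two.eq}, so this is legitimate. For the motion term, the computation~\eqref{motion_cost_eq} --- which uses Lemma~\ref{Q_pushfw_lem} and the fact just recorded that $\QR(\cdot,t)$ is a genuine quantile function for every $t$ --- yields $\int_\Omega\|V_t(x)\|_2^2R_t(x)\,dx=\int_0^1\|U(z,t)\|_2^2\,dz$, and in one spatial dimension $\|U\|_2^2=U^2$. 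Weighting the motion term by $\alpha^2$, summing, and integrating over $[0,T]$ shows that the cost of $(R,V)$ in Problem~\ref{spec_model} equals the cost of $(\QR,U)$ in the transformed problem. Since $\mathcal{T}$ is a cost-preserving bijection between the two feasible sets, the infima are equal, $(R^\star,V^\star)$ is optimal for one problem if and only if $\mathcal{T}(R^\star,V^\star)$ is optimal for the other, and the claimed relations $V=U\circ\FR$ and $U=V\circ\QR$ are precisely $\mathcal{T}^{-1}$ and $\mathcal{T}$ from~\eqref{transformation_T.eq}.

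I expect the delicate point to be not any individual calculation --- the substantive work sits in Lemmas~\ref{equivalent_dynamics}--\ref{equivalent_constraints_lem} and in the derivation of~\eqref{motion_cost_eq} --- but rather the bookkeeping needed to see that the constraint set displayed in the proposition, which omits an explicit monotonicity condition, really is the $\mathcal{T}$-image of the feasible set of Problem~\ref{spec_model}. In particular one must confirm that monotonicity of $\QR(\cdot,t)$ is inherited from the initial data and preserved by~\eqref{trans_mod_two.eq}--\eqref{trans_mod_three.eq}, since this is exactly what makes both the transformation $\mathcal{T}^{-1}$ and the pushforward identity~\eqref{motion_cost_eq} well-defined along every admissible trajectory; a minor additional check is the pointwise-in-$t$ applicability of Lemma~\ref{wass_lem}.
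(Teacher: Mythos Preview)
Your proposal is correct and follows essentially the same approach as the paper: invoke the bijection of Lemma~\ref{equivalent_constraints_lem} to match feasible sets, then use Lemma~\ref{wass_lem} and the computation~\eqref{motion_cost_eq} to show the transformation preserves cost, noting that the monotonicity constraint is automatically satisfied since $\QR(\cdot,0)$ is built from $R_0$. Your write-up is in fact more explicit than the paper's, which dispatches the argument in three sentences; in particular, your remark about the omitted monotonicity condition being inherited from the initial data and propagated by~\eqref{trans_mod_two.eq}--\eqref{trans_mod_three.eq} is exactly the point the paper relegates to the proof of Lemma~\ref{equivalent_constraints_lem}.
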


\begin{proof}
	It suffices to show that the bijective transformation \eqref{transformation_T.eq} preserves the cost of solutions. This is an immediate application of Lemma \ref{wass_lem} and Equation \eqref{motion_cost_eq}. Note that the construction of $\QR(\cdot,0)$ from $R_0$ automatically satisfies the monotonicity constraint from Lemma \ref{equivalent_constraints_lem}.
\end{proof}

\subsection{Decoupling into Scalar Problems} \label{decoupling_section}

In this section, we show how the infinite-dimensional LQ tracking problem of Proposition \ref{trans_mod_equiv} can be decomposed into an infinite number of uncoupled scalar LQ tracking problems. First, observe that except for the input constraint~\req{trans_mod_three}, the dynamics are decoupled in the index $z$. In addition, the objective~\req{trans_mod_one} is an integration over all $z$. Thus it appears that the solution is to minimize the objective independently at each value of $z$. This is effectively what we will do after accounting for the constraint~\req{trans_mod_three}. Note that at each $z$, the problem is a standard, scalar, LQ tracking problem. 

We account for the constraint~\req{trans_mod_three} as follows. First, observe that the constraints \eqref{trans_mod_two.eq},~\req{trans_mod_three} imply that if the initial quantile $\QR(\cdot,0)$ is equal at two points $z_1$ and $z_2$, then $\QR(z_1,t)=\QR(z_2,t)$  for all $t\geq0$. In other words, the level sets of $\QR$ do not change with time. Furthermore, the values of $\QR(\cdot,t)$ and $U(\cdot,t)$ must be constant over each level set. Thus, by partitioning the domain $[0,1]$ into level sets of $\QR(\cdot,0)$, we can write a single scalar LQ tracking problem over each level set. Then, these problems can be solved independently, as the constraint \eqref{trans_mod_three.eq} is automatically satisfied by the partition\footnote{The careful reader will notice that for this argument to hold, we also cannot form new level sets. While admissible solutions do not satisfy this in general, \emph{optimal} solutions do, which is all that we need for equivalence. This point is treated more carefully in the proof of Proposition \ref{decoup_mod_equiv}.\label{inactive_optimum}}.

We first describe the partitioning of the domain into level sets, as this forms the basis for our solution technique. We give a precise characterization of the problem equivalence later.

\begin{defn}\label{partition_def}
	Given a monotone function $q: [0,1] \to \R$, a {\em $q$-level-set partition $\caP_\rmq$} of $[0,1]$
	is a set of disjoint subsets 
	\be
	\Pq ~:=~ \lcb P_i\subseteq [0,1]; ~i\in \caI \rcb, 
	\ee
	indexed by the set of values $\caI := \text{range}(q)$ such that
	\begin{equation}
		z \in P_i \qquad \Leftrightarrow \qquad q(z) = i .
	\end{equation}
\end{defn}

Note that the index $i$ ranges over all distinct values of the function $q$, which may be finite, or countably or uncountably infinite. 
Elements of $\Pq$ are disjoint since $q:[0,1]\rightarrow\R$ is a single-valued function, and $[0,1]$ is  the union of all 
elements of $\Pq$ since each $z\in [0,1]$ must belong to the set $P_{q(z)}$.

\begin{defn} 							\label{PWC_P.def}
	Let $\Pq := \lcb P_i\subseteq[0,1]; ~i\in\caI \rcb$ be a $q$-level-set partition of $[0,1]$. 
	A function $g:[0,1]\rightarrow\R$ is called {\em $\Pq$-piecewise-constant} if it is constant on each $P_i\in\Pq$, 
	and we write $g(P_i)$ unambiguously for such functions.
	
	For a function $g$ that is not $\Pq$-piecewise-constant, 
	we define its {\em average with respect to the partition  (w.r.t.p.)   $\Pq$} by 
	\be
	\Pav{g} (z) := 
	\left\{ \begin{array}{ll} 
		g(z),  &\mbox{if}~ z\in P_i, ~ \left| P_i \right| = 0 , 		\\ 
		\tfrac{1}{|P_i|} \smint{P_i}{} g(z) ~dz,  & \mbox{if} ~z\in P_i ,  ~    \left| P_i \right| > 0,
	\end{array} 	\right. 
	\label{wrtp_P.eq}
	\ee
	where $\left| P_i \right|$ is the Lebesgue measure of the set $P_i\subseteq [0,1]$.
	Note that $\Pav{g}$ is $\Pq$-piecewise-constant and so $\Pav{g}(P_i)$ is unambiguous.
\end{defn}

We will mainly average quantile functions over partitions of $[0,1]$. Given a quantile function $Q_\mu$ of a 
density $\mu$, its average $\Pav{Q}_\mu$ is the quantile of some  density (denoted by $\Pav{\mu}$)
which can be reconstructed from $\Pav{Q}_\mu$ using Lemma \ref{Q_pushfw_lem}.
Notice that since $Q_\mu$ is monotone, each $P_i$ is either a singleton (if $|P_i| = 0$) or a finite interval (if $|P_i| > 0$).
If $P_i$ is a singleton, then $\Pav{\mu}(i) = \mu(i)$. If $P_i$ is a finite interval, then $\Pav{\mu}(i)$ is a single Dirac mass corresponding to the mass in $\mu$ on $Q_\mu(P_i)$.
A depiction of all of these ideas is illustrated in Figure~\ref{partition_graphic}.

\begin{figure}[t]
	\centering
	\includegraphics[width=0.9\linewidth]{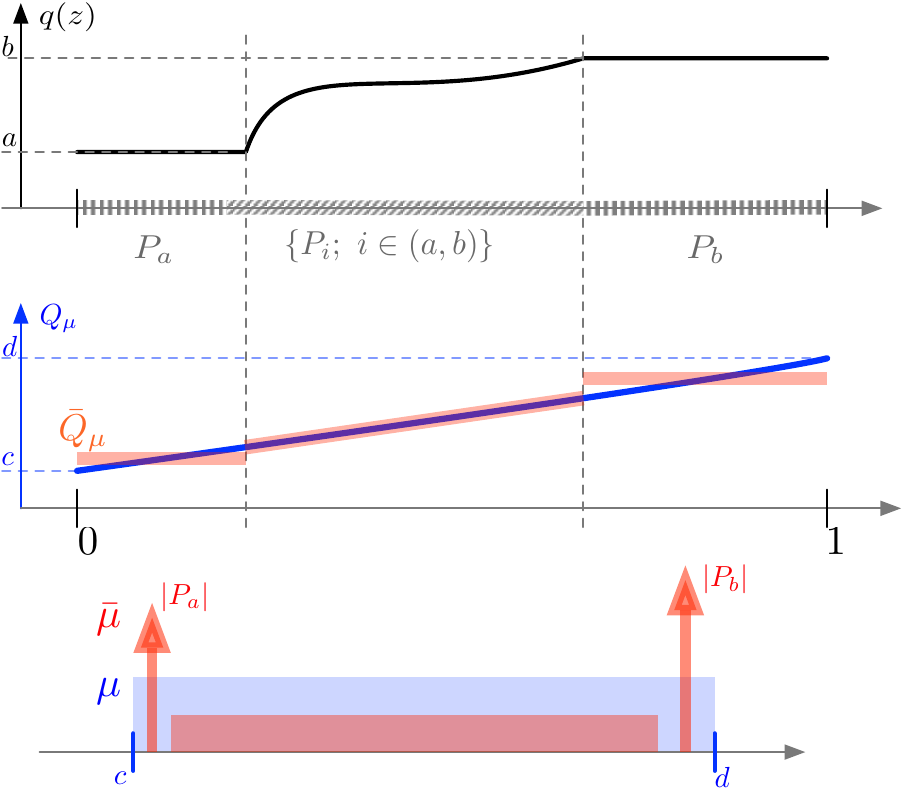}
	
	\mycaption{({\em Top}) The $q$-level-set partition $\Pq$ (Definition~\ref{partition_def})
		of the interval $[0,1]$ into level sets of a function $q:[0,1]\rightarrow \R$ (black).
		The partition is composed of the sets $P_a$, $P_b$ (finite intervals, 
		corresponding to the levels $a$ and $b$ of $q$), and 
		a continuum of singleton level sets $\lcb P_i; ~i\in(a,b) \rcb$
		of the values of $q$ between $a$ and $b$. 
		({\em Bottom}) A density $\mu$,  its ({\em middle}) quantile function $Q_\mu$, and 
		its average $\Pfav{Q}_\mu$  with respect to the partition $\Pq$. The average is 
		$\caP$-piecewise-constant (Definition~\ref{PWC_P.def}). ({\em Bottom}) $\Pav{\mu}$ is the 
		density of the averaged quantile  $\Pfav{Q}_\mu$. The Dirac masses, each of mass
		$\left| P_i \right|$, correspond to each $P_i$ of non-zero Lebesgue measure in the partition $\caP$.
	}
	\label{partition_graphic}
\end{figure}

We will now use these tools to define the equivalent family of scalar LQ tracking problems.
Let $\caP:=\caP_{\QR(\cdot,0)}$ be the $\QR(\cdot,0)$-level-set partition of  the interval  $[0,1]$, and denote the index set of this partition by $\caI := \text{range}(\QR(\cdot,0))$.
Recall that the constraints \eqref{trans_mod_two.eq},~\req{trans_mod_three} imply that $\QR(\cdot,t)$ and $U(\cdot,t)$ remain constant over each $P_i \in \caP$ for all $t$. In other words, $\QR(\cdot,t)$ and $U(\cdot,t)$ are $\caP$-piecewise-constant. The values $r_i(t) := \QR(P_i,t)$ and $u_i(t) := U(P_i,t)$ will thus be taken to be the state and input for the $i^{th}$ scalar LQ tracking problem.

The demand signal $\QD(\cdot,t)$ is not $\caP$-piecewise-constant. However, we claim that we can replace it with $\QDb(\cdot,t)$, its average w.r.t. the partition $\caP$, without changing the optimal solutions. This claim is justified by the following observations. First, if $f$ is a constant function on a set $S$ and $g$ is an integrable function on $S$, then
\begin{equation}
	\smint{S}{} (f(s) - g(s))^2 \, ds ~=~ \smint{S}{} (f(s) - \bar{g}(s))^2 \, ds ~+~ k ,
\end{equation}
where $\bar{g}$ denotes the average value of $g$ on $S$ and $k := \int (g - \bar{g})^2$ is a constant which depends only on $g$. 
Using this fact, we can rewrite the objective \eqref{trans_mod_one.eq} as follows
\begin{align}
	&\smint{0}{T} \smint{0}{1}  \big(  \QR(z,t) -\QD(z,t) \big)^2 + \alpha^2 U^2(z,t) \, dz\,dt 	\\
	&= \smint{0}{T} \smint{0}{1}  \big(  \QR(z,t) -\QDb(z,t) \big)^2 + \alpha^2 U^2(z,t) \, dz\,dt ~+~ K , \nonumber
\end{align}
where $K$ is a constant depending only on $\QD$ and the partition $\caP$ (i.e. not on $\QR$ or $U$).
Since adding a constant to the objective does not change optimal solutions, we are justified in considering the problem with $\QDb$ in place of $\QD$. Furthermore, $\QDb$ is $\caP$-piecewise constant, so we can take $d_i(t) := \QDb(P_i,t)$ to be the tracking signal for the $i^{th}$ scalar problem. 
Now, we can restate the optimal control problem of Proposition~\ref{trans_mod_equiv} as a decoupled family of scalar LQ tracking problems as follows.
\begin{prop} 													\label{decoup_mod_equiv}
	Given the problem of Propsition~\ref{trans_mod_equiv}, 
	let $\caP$ be the $\QR(\cdot,0)$-level-set partition of $[0,1]$ with index set $\caI$,
	and consider the family of scalar LQ tracking problems
	\begin{equation}
		\begin{aligned}
			& \inf_{r_i,u_i} \, \smint{0}{T} \lp  \big( r_i(t) - d_i(t)  \big)^2+\alpha^2 u_i^2(t) \rp dt \\
			& \quad \text{s.t.} \quad \dot{r}_i(t) = u_i(t) ,
			\hstm 
			r_i(0) = \QR(P_i,0). 
		\end{aligned}
		\label{opt4.eq}
	\end{equation}
	for $i \in \caI$, where the $d_i$ are the demand reference signals
	\be
	d_i (t) := \QDb(\cdot,t) (P_i) .
	\label{di_averaged.eq}
	\ee
	The solution to the problem of Proposition~\ref{trans_mod_equiv} is given from the solutions of~\req{opt4} by 
	\begin{align} \label{reconstruction}
		\QR({P}_i,t) = {r}_i(t), 
		\hstm 
		{U}({P}_i,t) = {u}_i(t),
	\end{align}
	and the costs are related by
	\begin{equation} \label{prop_12_cost}
		J(\QR,U;\alpha,T) ~=~ \smint{0}{1} J_i(r_i,u_i;\alpha,T) \, dz ~+~ K,
	\end{equation}
	where $J$ denotes the value of \eqref{trans_mod_one.eq}, $J_i$ the value of the $i^{th}$ scalar problem \eqref{opt4.eq} with $i = \QR(z,0)$, and
	\begin{equation}
		K ~:=~ \smint{0}{T} \smint{0}{1}  \left( \QDb(z,t) - \QD(z,t) \right)^2 \, dz \, dt.
	\end{equation}
\end{prop}

\begin{proof}
	The proof follows directly as outlined above. The equivalence of the dynamics is trivial. The only claim that requires further justification is that the constraint \eqref{trans_mod_three.eq} is inactive at the optimum\footref{inactive_optimum}. The details are in Appendix \ref{decoup_mod_equiv_appen}.
\end{proof}

The above family of scalar LQ tracking problems are indexed by $\caI$, the index of the partition sets as determined by the initial 
resource quantile $\QR(\cdot,0)$. In particular, there is a distinction between the regions where $\QR(\cdot,0)$ is constant versus where it is strictly increasing. In the latter case, the level sets $P_i$ are singletons, and the problems \req{opt4} form a continuum of scalar LQ tracking problems, one for each value of $z$. On the other hand, when the resource distribution is composed of discrete agents, $\QR(\cdot,0)$ is piecewise constant, the level sets $P_i$ are finite intervals, and the problems \req{opt4} form a finite collection of scalar tracking problems, one for each agent of non-zero mass. This  case is depicted graphically in Figure \ref{decoupling}.

\begin{figure}[t]
	\centering
	\includegraphics[width=0.9\linewidth]{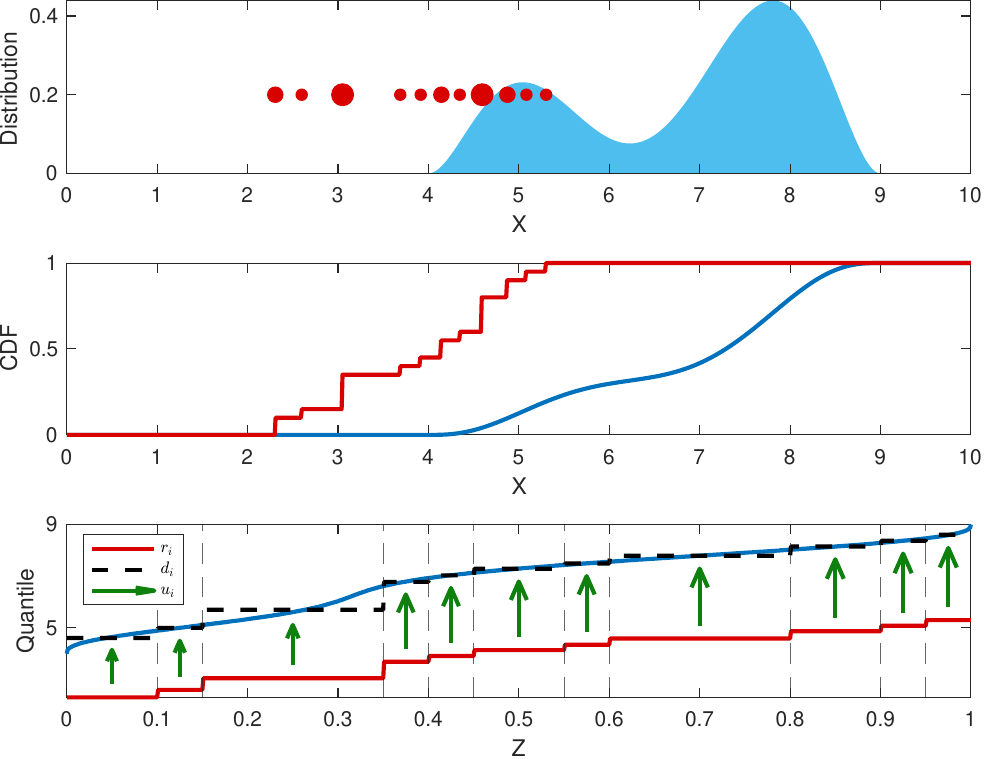}
	\mycaption{Decoupling in discrete-agent case. Each agent in the resource distribution (red) corresponds to a constant region in the quantile function. The vertical dotted lines show the separation of these constant regions via partitioning. A single scalar LQ tracking problem can be written for each element in this partition, with state $r_i = \QR(P_i)$ (red), control $u_i = U(P_i)$ (green), and tracking signal $d_i = \QDb(P_i)$ (black).}
	\label{decoupling}
\end{figure}

\subsection{Solution of the Scalar LQ Tracking Problem} \label{lqprob}

The solution of the scalar LQ tracking problem~\req{opt4} is well-established in the literature (see, e.g. \cite[Section~5.2]{Sage1977}, where it is referred to as the ``servomechanism problem''). We summarize the solution here for our particular setting. 

The optimal control $u_i$ consists of feedback and  feedforward components
\begin{equation}
	u_i(t) ~=~ - \tfrac{1}{\alpha^2} \big(  \ric (t) \, r_i(t)  ~+~ \ff_i(t)  \rom \big), 
	\label{u_opt_form.eq}
\end{equation}
where $\ric$ solves a (scalar) differential Riccati equation and $\ff_i$ is the output of a linear time-varying 
system driven by the ``reference signal'' $d_i$
\begin{align}
	\dot{\ric}(t) ~&=~ \tfrac{1}{\alpha^2}  \ric^2(t)  - 1	,		&& \ric(\sT)=0	,			\label{diff_ricc.eq}	\\
	\dot{\ff}_i(t) ~&=~ \tfrac{1}{\alpha^2} \ric(t) ~ \ff_i(t)  ~+~ d_i(t),		&& \ff_i(\sT)=0 , 			\label{diff_g.eq}	\\
	\dot{r}_i(t) ~&=~ -\tfrac{1}{\alpha^2} \ric(t) ~ r_i(t)  ~-~ \tfrac{1}{\alpha^2}  \, \ff_i(t),		&& r_i(0) ~\mbox{given} 	.
	\label{diff_r_i.eq}
\end{align}
The Riccati equation solution is always positive, and therefore the equation for the feedforward term $\ff_i$ is stable 
evolving backwards, whereas the closed-loop system equation for $r_i$ is stable evolving forwards. 
Note that since   the feedforward equation~\req{diff_g} has a final boundary condition, it
requires  the demand signal $d_i$ to be known ahead of time. 
The scalar  differential Riccati equation~\req{diff_ricc} can be solved explicitly
\be
\ric(t) ~=~ {\alpha}~ \tanh \left( (\sT-t) / {\alpha} \right).
\label{f_explicit.eq}
\ee
This gives the time-varying ``A-matrices'' of the scalar systems~\req{diff_g} and~\req{diff_r_i}. The state transition ``matrix'' of the closed-loop  system~\req{diff_r_i} for $r_i$ is then computed as 
\be
\phi_\rmr(t,\tau) = \exp\lB\sm  \smint{\tau}{t}  \tfrac{1}{\alpha^2}  \ric (s) \,ds \rB
= \frac{\cosh \lB (\sT-t) / {\alpha} \rB}{\cosh \lB (\sT-\tau) / {\alpha} \rB }, 
\label{phi_r_def.eq}
\ee
where the explicit expression comes from substituting~\req{f_explicit}. The state transition matrix for the feedforward term $\ff_i$ is the reciprocal of that, i.e., $\phi_\rmy(t,\tau)=1/\phi_\rmr(t,\tau)$, 
and therefore the variation-of-constants formula gives the solution to~\req{diff_g} as a backwards integration
\be
\ff_i(t) 
=\!\! \smint{T}{t} \! \phi_\rmy (t,\tau) \,d_i(\tau) \,d \tau  		,	~		
\phi_\rmy (t,\tau) 
= 	 \frac{\cosh \lB (\sT\sm \tau) / {\alpha} \rB}{\cosh \lB (\sT\sm t) / {\alpha} \rB }  .
\label{g_integral.eq}
\ee
The exact form of the feedforward term $\ff_i$ then depends on the signal $d_i$ to be tracked. Using the variation-of-constants formula, it is also possible to write expressions for $r_i$, $u_i$, and the cost $J_i$, but these do not simplify in general, and so we do not include them here.

\subsection{Solution to Original Problem} \label{solution_section}

The solutions of the scalar LQ tracking problems can be put together using Proposition~\ref{decoup_mod_equiv}
to give the solution to the problem of 
Proposition~\ref{trans_mod_equiv} in terms of the quantile functions.

\begin{prop} \label{trans_mod_soln}
	The solution to the transformed problem of Proposition~\ref{trans_mod_equiv} is given by
	\begin{align}
		{U}(z,t) ~&=~ -\tfrac{1}{\alpha^2}  \ric(t) ~ \QR(z,t)  ~-~ \tfrac{1}{\alpha^2}  \sfY(z,t) , 
	\end{align} 
	where the feedback $\ric$ is given by \eqref{f_explicit.eq} and feedforward $\sfY$ by
	\begin{align} 
		\sfY(z,t) &~=~ \smint{T}{t} \phi_\rmy(t,\tau) \, \QDb(\cdot,\tau)(z) \, d \tau 	,
	\end{align}
	where $\phi_\rmy$ is given by~\req{g_integral}, and 
	$\QDb$ is the average of $\QD$ with respect to the $\QR(\cdot,0)$-level-set partition of $[0,1]$.
\end{prop}

This in turn gives the solution to the original Problem~\ref{spec_model}.

\begin{thm} \label{spec_mod_soln}
	In one spatial dimension, the optimal velocity field (control) of Problem~\ref{spec_model} is given by
	\be
	{V}(x,t) ~=~ - \tfrac{1}{\alpha^2} \ric(t) \, x ~-~\tfrac{1}{\alpha^2}  \sfM(x,t) ,
	\ee
	where the feedback $\ric$ is given by~\req{f_explicit} and feedforward $\sfM$ by 
	\be
	\sfM(x,t) ~=~ \smint{T}{t} \phi_\rmy(t,\tau) \,
	\QDb(\cdot,\tau) \big( \FR(x,t) \big) \,d \tau	,	\label{FF_origin.eq}
	\ee
	where $\phi_\rmy$ is given by~\req{g_integral}, and 
	$\QDb$ is the average of $\QD$ with respect to the $\QR(\cdot,0)$-level-set partition of $[0,1]$.
\end{thm}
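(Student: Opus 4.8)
The plan is to obtain the result for the original spatial variable by simply pushing the solution of Proposition~\ref{trans_mod_soln} back through the change of variables of Lemma~\ref{equivalent_constraints_lem}. Recall that Proposition~\ref{trans_mod_equiv} establishes that Problem~\ref{spec_model} is equivalent to the quantile-space problem, with optimal pairs related by the bijection $\mathcal{T}^{\sm1}:(\QR,U)\mapsto(R,U\circ\FR)$; in particular the optimal velocity field is $V = U\circ\FR$. Since Proposition~\ref{trans_mod_soln} already supplies the optimal $U(z,t)$, the entire content of the theorem is the substitution $z=\FR(x,t)$ into that formula together with a simplification of the resulting feedback term.

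Concretely, I would write
\[
V(x,t) ~=~ U\big(\FR(x,t),t\big) ~=~ -\tfrac{1}{\alpha^2}\ric(t)\,\QR\big(\FR(x,t),t\big) ~-~ \tfrac{1}{\alpha^2}\,\sfY\big(\FR(x,t),t\big),
\]
using the expression for $U$ from Proposition~\ref{trans_mod_soln}. For the feedback term I would invoke the pseudo-inverse property of $\FR$ and $\QR$ (fact~2 following the CDF/quantile definition): for $x\in\supp R_t$ one has $\QR(\FR(x,t),t)=x$, so the first term collapses to $-\tfrac{1}{\alpha^2}\ric(t)\,x$, with $\ric$ given by~\req{FB_X} via the closed form~\req{f_explicit}. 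For the feedforward term I would define $\sfM(x,t):=\sfY(\FR(x,t),t)$ and substitute the integral representation of $\sfY$ from Proposition~\ref{trans_mod_soln} — the kernel $\phi_\rmy(t,\tau)$ does not involve $z$, so evaluating at $z=\FR(x,t)$ yields directly the expression~\req{FF_origin}. Assembling these two pieces gives the claimed formula for $V$.

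The only real obstacle is the behaviour of the composition $\QR\circ\FR$ away from the support of $R_t$: on a gap of $\supp R_t$ the CDF $\FR(\cdot,t)$ is flat, and at a Dirac mass of $R_t$ the quantile $\QR(\cdot,t)$ is flat, so $\QR(\FR(x,t),t)=x$ holds only for $x\in\supp R_t$. I would therefore state the conclusion as giving \emph{an} optimal velocity field, one that agrees with the optimal control wherever $R_t>0$; off the support $V$ is immaterial, since it enters neither the transport dynamics~\req{R_PF} nor the motion cost~\req{v_cost} (both weighted by $R_t$), and the affine-plus-feedforward expression is simply a convenient extension. I would also remark that because $U(\cdot,t)$ is $\caP$-piecewise-constant (Proposition~\ref{decoup_mod_equiv}) and $\FR(\cdot,t)$ is constant exactly on the complement of $\supp R_t$, the composite $V(\cdot,t)=U(\cdot,t)\circ\FR(\cdot,t)$ is well-defined and single-valued, and automatically respects the implicit constraint that co-located mass moves at a common velocity.
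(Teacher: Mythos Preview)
Your proposal is correct and is exactly the approach the paper takes: the theorem is stated immediately after Proposition~\ref{trans_mod_soln} as a direct consequence via the inverse transformation $V=U\circ\FR$ of Proposition~\ref{trans_mod_equiv}, with no separate proof given. Your treatment is in fact more careful than the paper's, since you explicitly address the $\QR\circ\FR=I$ identity holding only $R_t$-a.e.\ (the paper's~\eqref{q_f_inverses}) and correctly observe that off-support values of $V$ are immaterial to both the dynamics and the cost.
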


Note again that the feedforward term $\sfM$ in~\req{FF_origin} requires a backwards integration of the demand
signal $D$ from the final time $\sT$. 
Thus  the entire demand signal must be known ahead of time. There are at least two practical cases where this could reasonably be assumed. The first case is  when the demand is static, and the second is when the demand is periodic in time. The next section investigates each of these special cases.

Before moving on, we point out a curiosity regarding the costs for these problems. As can be seen in \eqref{prop_12_cost}, the minimum costs are bounded below by the constant
\begin{equation} \label{k_eqn}
	K ~=~ \smint{0}{T} \smint{0}{1}  \left( \QDb(z,t) - \QD(z,t) \right)^2 \, dz \, dt.
\end{equation}
This constant depends only on $\QD$ and the partition $\caP$ (which in turn depends on the initial resource quantile $\QR(\cdot,0)$). In other words, $K$ depends only on the problem parameters, and can therefore be interpreted as a \emph{fundamental performance limitation} of the system. I.e., is not possible to achieve a cost lower than $K$ for fixed $D$, $R_0$. However, if we are allowed the choice of $R_0$, then we can affect $\caP$, potentially lowering this cost. This raises an interesting question: given a resource composed of $N$ discrete agents with (unequal) masses and a demand signal $D$, how should the agents be initially positioned so as to minimize $K$? While we do not answer this question here, a heuristic argument based on the graphs of the quantile functions suggests that agents with larger mass be positioned near regions of high demand density and vice versa.


\section{Special Cases: Static and Periodic Demands}					\label{explicit.sec}

In this section, we investigate further the special cases of static and periodic demands.

\subsection{Static Demand} \label{static_case}

When the demand signal $D_t = D$ is static (i.e. constant in time), the feedforward term $\ff_i$ takes on the form
\begin{equation}
	\ff_i(t) ~=~ - \ric(t) \, d_i .
\end{equation}
This gives the optimal control \req{u_opt_form} in ``error feedback'' form 
\begin{equation} \label{input_u}
	u_i(t)  ~=~  -\tfrac{1}{\alpha^2}  \ric(t) \, \big( r_i(t) - d_i \big) .
\end{equation}
The optimal state trajectory can then be computed as 
\begin{equation}
	r_i(t) ~=~ \phi_\rmr(t,0) \, r_i(0) ~+~ \lB 1-\phi_\rmr(t,0) \romn \rB \, d_i .  
	\label{traj.eq}
\end{equation}
This trajectory is a straight-line interpolation between the initial state $r_i(0)$ and the constant reference signal $d_i$, 
where the ``rate of travel'' along this trajectory is  determined  by $\phi_\rmr$. Substituting this trajectory into the cost functional, the optimal cost for each scalar problem is computed as 
\begin{equation}
	J_i (r_i(0),d_i;\alpha,T) ~=~ \lB r_i(0) - d_i \rB^2 \, {\alpha} \, \tanh \big( \sT / {\alpha} \big) . \label{opt_cost_1}
\end{equation}

\begin{figure*}[t]
	\centering
	\begin{subfigure}[t]{0.32\textwidth}
		\centering
		\includegraphics[width=\linewidth]{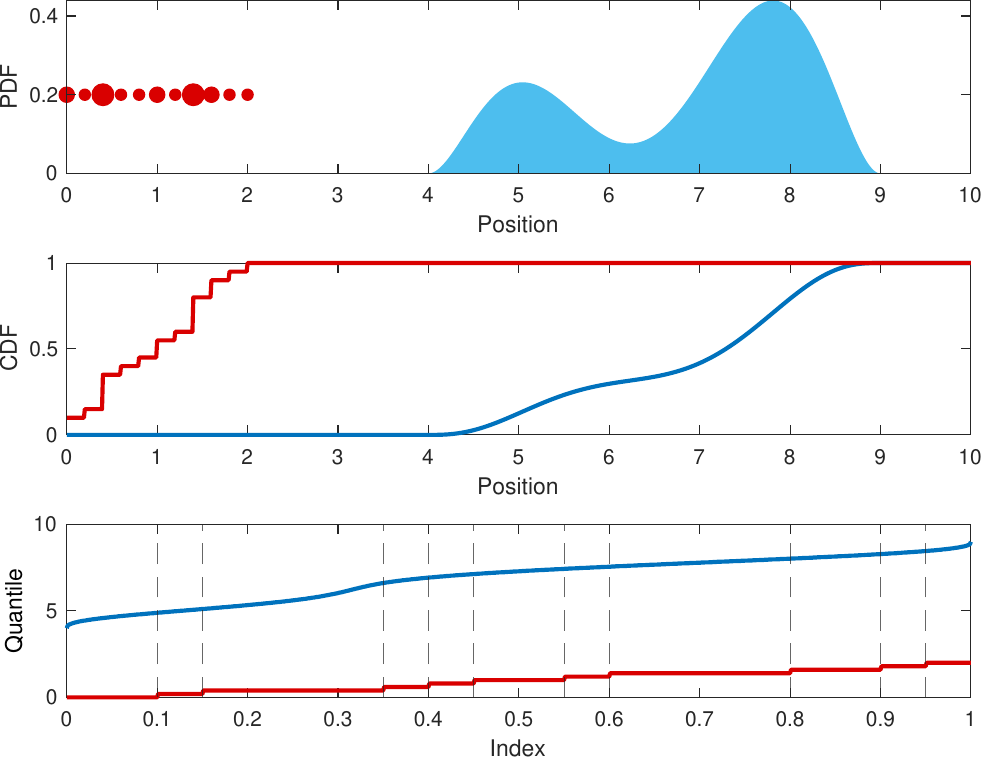}
		\mycaption{Initial conditions ($t=0$)}
		\label{plot_stat_init}
	\end{subfigure}
	\hfill
	\begin{subfigure}[t]{0.32\textwidth}
		\centering
		\includegraphics[width=\linewidth]{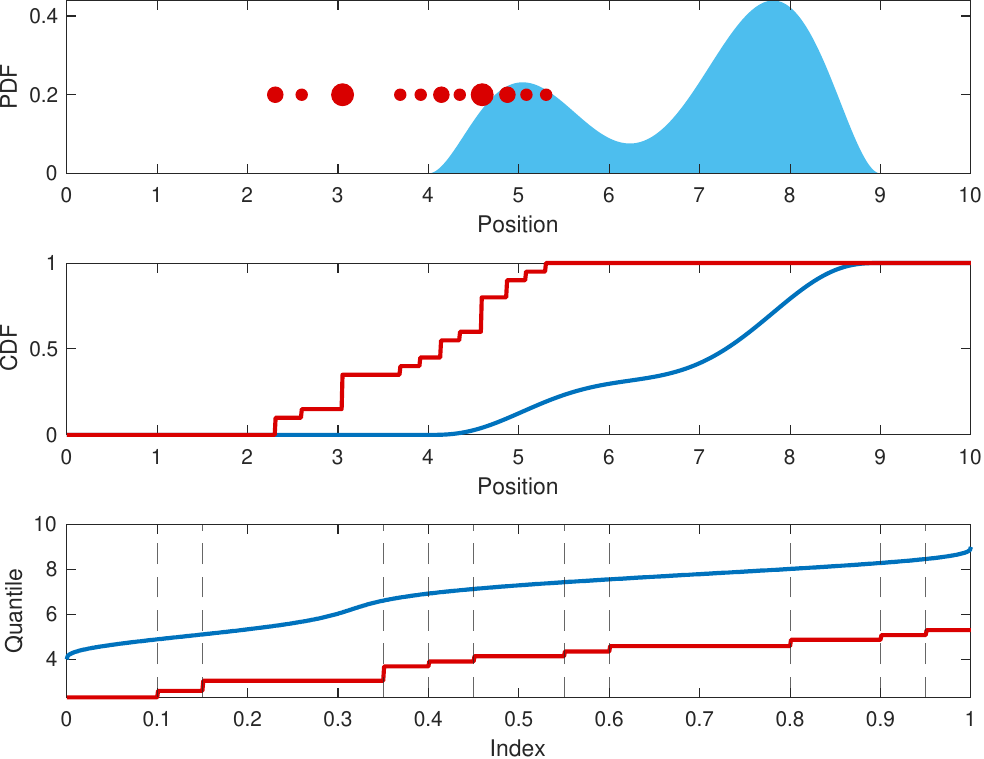}
		\mycaption{Intermediate conditions ($t \approx 1.4$)}
		\label{plot_stat_inter}
	\end{subfigure}
	\hfill
	\begin{subfigure}[t]{0.32\textwidth}
		\centering
		\includegraphics[width=\linewidth]{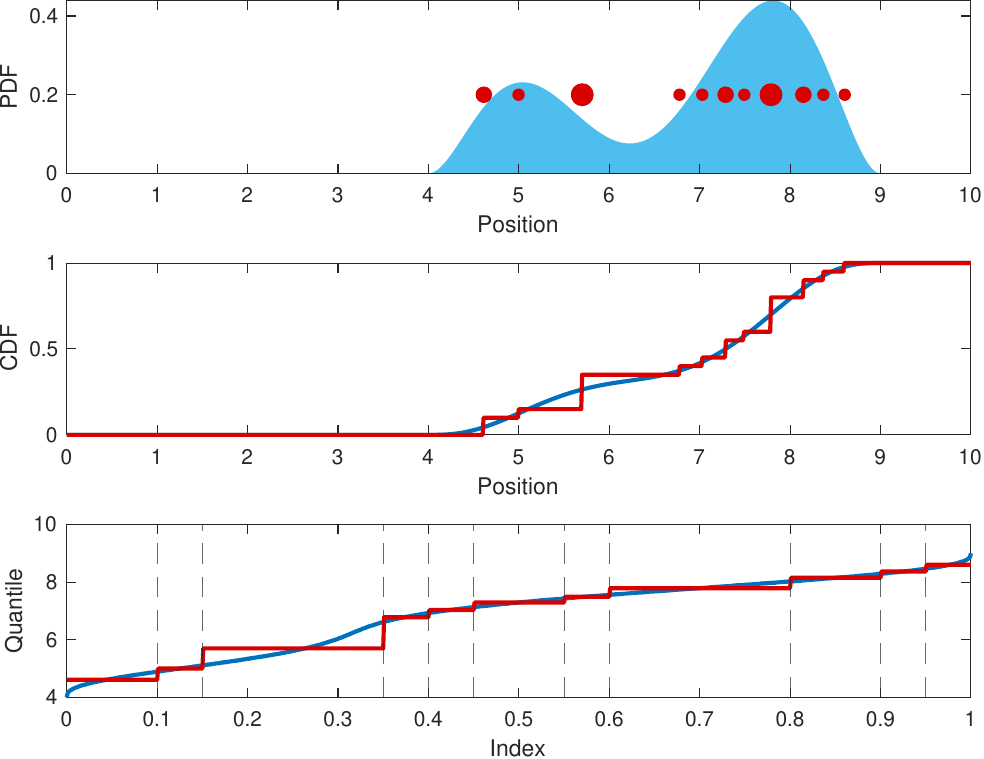}
		\mycaption{Final conditions ($t=10$)}
		\label{plot_stat_final}
	\end{subfigure}
	\mycaption{Initial, intermediate, and final conditions of the resource $R$ (red), demand $D$ (blue), and their corresponding CDFs and quantile functions. The vertical dotted lines on the quantile function plots separate elements of the partition $\mathcal{P}$. The parameters $\alpha = 2$ and $\sT=10$ were used for this particular example. Due to the relatively large value of $\sT / {\alpha}$, the final distribution $R_\sT$ ends up very close to $\bar{D}$, the nearest reachable distribution to $D$.}
	\label{fig:three graphs}
\end{figure*}

Applying Proposition \ref{decoup_mod_equiv} immediately gives the solution to the problem of Proposition \ref{trans_mod_equiv} in this special case.

\begin{prop}
	Consider the problem of Proposition \ref{trans_mod_equiv} with $D$ constant-in-time. The optimal control is given by
	\begin{equation}
		U(\cdot,t) ~=~ -\tfrac{1}{\alpha^2}  \ric(t) \, \big( \QR(\cdot,t) - \QDb \big) ,
	\end{equation}
	where $\ric$ is given by \eqref{f_explicit.eq}, and $\QDb$ is the average of $\QD$ with respect to the $\QR(\cdot,0)$-level-set partition of $[0,1]$.
	This control generates the trajectory
	\begin{equation}
		\QR(\cdot,t) ~=~ \phi_\rmr(t,0) \, \QR(\cdot,0) ~+~ \lB 1-\phi_\rmr(t,0) \romn \rB \, \QDb ,
		\label{traj2.eq}
	\end{equation}
	where $\phi_\rmr$ is given by \eqref{phi_r_def.eq}, and attains the cost
	\begin{equation}
		J =  \big\| \QR(\cdot,0) - \QDb \big\|_{L^2}^2 \, {\alpha} \tanh \big( \sT / \alpha \big) + \, T \big\| \QD - \QDb \big\|_{L^2}^2 .
	\end{equation}
\end{prop}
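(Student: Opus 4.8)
The plan is to obtain this result as an immediate corollary of Proposition~\ref{decoup_mod_equiv} together with the scalar static-demand computations \eqref{input_u}, \eqref{traj.eq}, \eqref{opt_cost_1}. First I would note that since $D$ is constant in time, its quantile function $\QD$ is constant in time, and hence so is its partition-average $\QDb$: the $\QR(\cdot,0)$-level-set partition $\caP$ is fixed once and for all, and averaging a time-independent function over fixed sets produces a time-independent function. Consequently each reference signal $d_i(t) = \QDb(P_i)$ in the decoupled family \eqref{opt4.eq} is a constant $d_i$, so every scalar problem is exactly the static scalar LQ tracking problem treated at the start of this subsection.

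Next I would invoke the scalar solution. For each $i \in \caI$ the optimal input is \eqref{input_u} and the optimal state trajectory is \eqref{traj.eq}, both obtained by substituting $\ff_i(t) = -\ric(t)\,d_i$ into \eqref{diff_g.eq} and using the Riccati equation \eqref{diff_ricc.eq}. The reconstruction formulas \eqref{reconstruction} then give $\QR(P_i,t) = r_i(t)$ and $U(P_i,t) = u_i(t)$ on each partition cell. Since $\QR(\cdot,t)$, $U(\cdot,t)$ and $\QDb$ are all $\caP$-piecewise-constant in $z$ for every $t$ (the first two by the argument preceding Proposition~\ref{decoup_mod_equiv}), the cell-wise identities lift verbatim to the claimed functional identities $U(\cdot,t) = -\tfrac{1}{\alpha^2}\ric(t)\big(\QR(\cdot,t) - \QDb\big)$ and $\QR(\cdot,t) = \phi_\rmr(t,0)\,\QR(\cdot,0) + \big(1 - \phi_\rmr(t,0)\big)\QDb$, with $\ric$ and $\phi_\rmr$ given explicitly by \eqref{f_explicit.eq} and \eqref{phi_r_def.eq}.

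Finally, for the cost I would use the decomposition from Proposition~\ref{decoup_mod_equiv}, $J(\QR,U;\alpha,T) = \int_0^1 J_i(r_i,u_i;\alpha,T)\,dz + K$ with $i = \QR(z,0)$. Plugging in the scalar optimal cost \eqref{opt_cost_1}, $J_i = (r_i(0) - d_i)^2\,\alpha\tanh(T/\alpha)$, and recalling $r_i(0) = \QR(z,0)$ and $d_i = \QDb(z)$, gives $\int_0^1 J_i\,dz = \alpha\tanh(T/\alpha)\,\|\QR(\cdot,0) - \QDb\|_{L^2}^2$. The constant $K = \int_0^T\int_0^1 \big(\QDb(z,t) - \QD(z,t)\big)^2\,dz\,dt$ collapses to $T\,\|\QD - \QDb\|_{L^2}^2$ because its integrand is independent of $t$. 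Summing the two contributions yields the stated cost.

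I do not expect any genuine obstacle here, as the proposition is purely a specialization of machinery already in place. The only points requiring a moment of attention are (i) verifying that a static $D$ really does yield time-constant reference signals $d_i$ — which is exactly where the fact that the partition $\caP$ is determined once by $\QR(\cdot,0)$ and never changes is used — and (ii) checking that one may freely translate between the cell-indexed scalars $r_i, u_i, d_i$ and the functions $\QR(\cdot,t), U(\cdot,t), \QDb$, which is legitimate precisely because each of these is $\caP$-piecewise-constant in $z$ at every time $t$.
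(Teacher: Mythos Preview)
Your proposal is correct and follows exactly the approach the paper takes: the paper states that ``Applying Proposition~\ref{decoup_mod_equiv} immediately gives the solution to the problem of Proposition~\ref{trans_mod_equiv} in this special case,'' and your argument simply spells this out by combining the decoupling of Proposition~\ref{decoup_mod_equiv} with the scalar static-demand formulas \eqref{input_u}, \eqref{traj.eq}, \eqref{opt_cost_1} and the cost decomposition (including the constant $K$). There is nothing to add.
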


In particular, the optimal trajectory follows a straight line in the $L^2$ space of quantile functions from $\QR(\cdot,0)$ to $\QDb$ at a rate determined by $\phi_\rmr$. Furthermore, $\QDb$ is the nearest function to $\QD$ in the $L^2$ subspace of $\caP$-piecewise-constant functions. Since the (monotone nondecreasing) $\caP$-piecewise-constant functions are exactly those which are reachable from the initial condition $\QR(\cdot,0)$ under the constraints \eqref{trans_mod_two.eq}, \eqref{trans_mod_three.eq}, we get a very nice form for the overall solution: \emph{optimal trajectories $\QR$ follow the shortest path from the initial state $\QR(\cdot,0)$ to $\QDb$ -- the nearest reachable state to $\QD$ -- at a rate determined by $\phi_\rmr$.} We also emphasize that the optimal cost decomposes into the weighted sum of the distances from $\QR(\cdot,0)$ to $\QDb$ and from $\QDb$ to $\QD$. This latter term is exactly the ``performance limitation'' constant $K$ from the previous section, which can be interpreted here as the distance from $\QD$ to the reachable set. Due to the isometry between the $L^2$ space of quantile functions and the 2-Wasserstein space of densities in 1D, this solution structure carries over exactly to our original problem setting. That is\footnote{Compare this result with that of dynamic optimal transport \cite{Benamou2000}, where solutions follow Wasserstein geodesics from $R_0$ to $D$ at a constant rate.} : \emph{the optimal trajectory $R$ follows the Wasserstein geodesic from the initial state $R_0$ to a density $\bar{D}$ -- the nearest reachable density to $D$ in the $\W_2$-distance -- at a rate determined by $\phi_\rmr$.} This is stated formally in the following theorem.

\begin{thm} \label{static_soln}
	Consider the problem \eqref{opt2.eq} with $\Omega \subset \R$ and $D$ constant-in-time. The optimal control is given by
	\begin{equation} \label{static_control}
		{V}(x,t) 	~=~ -\tfrac{1}{\alpha^2}  \ric(t) \,\big( x - \sfM(x,t) \big)	,	
	\end{equation}
	where $\ric$ is given by \eqref{f_explicit.eq} and
	\begin{equation} \label{static_m}
		\sfM(x,t) 		~:=~ \QDb \big( \FR(x,t) \big)  ,
	\end{equation}
	where $\QDb$ is the average of $\QD$ w.r.t. the $\QR(\cdot,0)$-level-set partition of $[0,1]$.
	This control generates the trajectory
	\begin{equation} \label{orig_trajectory}
		R_t ~=~ \Big[ \phi_\rmr(t,0) \, I ~+~ \big( 1-\phi_\rmr(t,0) \big) \, \sfM(\cdot,0) \Big]_\# R_0 ,
	\end{equation}
	where $\phi_\rmr$ is given by~\req{phi_r_def}, $I$ is the identity map on $\Omega$, and $\#$ denotes the measure pushforward \eqref{pushfw_eqn}.
	Furthermore, $\sfM$ is the optimal transport map taking $R_0$ to $\bar{D}$, and thus \eqref{orig_trajectory} follows the associated Wasserstein geodesic.
	Furthermore, this solution attains the cost
	\begin{equation}
		J ~=~ \W_2^2 \lB R_0,\Pav{D} \romn\rB \, {\alpha} \, \tanh (T / {\alpha}) ~+~ T \, \W_2^2 \lB D,\Pav{D} \romn\rB .	\label{opt_cost_2.eq}
	\end{equation}
\end{thm}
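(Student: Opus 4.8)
The plan is to obtain Theorem~\ref{static_soln} directly from the proposition that immediately precedes it (the static-demand specialization of Proposition~\ref{trans_mod_equiv}) by pushing its quantile-space solution back through the inverse transformation $\mathcal{T}^{\sm1}$ of Lemma~\ref{equivalent_constraints_lem}, and then converting the $L^2$ cost into Wasserstein distances via the one-dimensional isometry of Lemma~\ref{wass_lem}. Essentially all the analytic content is already established; what remains is bookkeeping and the correct identification of the geometric objects.

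For the control law, that proposition gives the optimal input $U(\cdot,t) = -\tfrac{1}{\alpha^2}\ric(t)\big(\QR(\cdot,t)-\QDb\big)$. Since $\mathcal{T}^{\sm1}$ sends $(\QR,U)$ to $(R,U\of\FR)$ (see~\eqref{transformation_T.eq}), the optimal velocity field is $V(x,t) = U\big(\FR(x,t),t\big)$. Using the pseudoinverse identity $\QR(\FR(x,t),t)=x$ on $\supp R_t$ (which is all that is needed) together with the definition~\eqref{static_m}, $\sfM(x,t) := \QDb(\FR(x,t))$, this collapses to $V(x,t) = -\tfrac{1}{\alpha^2}\ric(t)\big(x-\sfM(x,t)\big)$, which is~\eqref{static_control}; the general feedforward term of Theorem~\ref{spec_mod_soln} loses its backward time-integral here because in the static case $\ff_i(t) = -\ric(t)d_i$.

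For the trajectory, the same proposition gives $\QR(\cdot,t) = \phi_\rmr(t,0)\,\QR(\cdot,0) + (1-\phi_\rmr(t,0))\,\QDb$. Writing $\QDb = \sfM(\cdot,0)\of\QR(\cdot,0)$ (valid because $\QDb$ is constant on each level set of $\QR(\cdot,0)$, which makes $\QDb(\FR(\QR(z,0),0)) = \QDb(z)$), this becomes $\QR(\cdot,t) = \big[\phi_\rmr(t,0)\,I + (1-\phi_\rmr(t,0))\,\sfM(\cdot,0)\big]\of\QR(\cdot,0)$. Since $\phi_\rmr(t,0)\in[0,1]$ and both $\QR(\cdot,0)$ and $\QDb$ are monotone nondecreasing with range in the convex set $\Omega$, the convex combination $\QR(\cdot,t)$ is again a genuine quantile function, so Lemma~\ref{Q_pushfw_lem} gives $R_t = [\QR(\cdot,t)]_{\sss\!\#}\bone$; composing pushforwards and using $[\QR(\cdot,0)]_{\sss\!\#}\bone = R_0$ produces~\eqref{orig_trajectory}. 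To identify $\sfM$ as an optimal transport map, note that $\QDb$, being monotone nondecreasing, is the quantile function of $\Pav{D}:=[\QDb]_{\sss\!\#}\bone$, so $\sfM(\cdot,t) = Q_{\Pav{D}}\of F_{R_t}$ is the monotone rearrangement, which in one dimension is exactly the optimal transport map from $R_t$ to $\Pav{D}$; consequently~\eqref{orig_trajectory}, being affine in $\sfM(\cdot,0)$, is the displacement interpolation, i.e. the $\W_2$-geodesic from $R_0$ to $\Pav{D}$ traced on the time schedule $1-\phi_\rmr(t,0)$.

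Finally, substituting Lemma~\ref{wass_lem} into the $L^2$ cost of the preceding proposition and using that $\QDb$ is the quantile function of $\Pav{D}$ turns $\|\QR(\cdot,0)-\QDb\|_{L^2}^2$ into $\W_2^2(R_0,\Pav{D})$ and $\|\QD-\QDb\|_{L^2}^2$ into $\W_2^2(D,\Pav{D})$, which yields~\eqref{opt_cost_2.eq}. I expect the algebra to be routine; the real care is in the pseudoinverse relations when $R_0$ carries atoms or has flat spots --- one must argue that $V$ is only required on $\supp R_t$, that the input constraint~\eqref{trans_mod_three.eq} pins down a consistent value there, and that $\QDb$ is constant precisely on the level sets of $\QR(\cdot,0)$ so that $\sfM(\cdot,0)\of\QR(\cdot,0)$ reproduces $\QDb$; this is exactly where the structure of Lemma~\ref{equivalent_constraints_lem} and Proposition~\ref{decoup_mod_equiv} must be invoked rather than re-derived. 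One also needs the precise one-dimensional optimal-transport facts (the monotone rearrangement is the optimal map, and the affine interpolation of the identity with the optimal map is the geodesic) to make the geometric claims about $\sfM$ and~\eqref{orig_trajectory} rigorous.
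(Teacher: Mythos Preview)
Your proposal is correct and follows essentially the same route as the paper: apply the inverse transformation $\mathcal{T}^{\sm1}$ to the static-demand quantile-space solution to obtain~\eqref{static_control}, factor the quantile trajectory as a composition with $\QR(\cdot,0)$ and use Lemma~\ref{Q_pushfw_lem} together with the composition rule for pushforwards to get~\eqref{orig_trajectory}, invoke the one-dimensional isometry of Lemma~\ref{wass_lem} for the cost~\eqref{opt_cost_2.eq}, and cite standard one-dimensional optimal-transport facts (monotone rearrangement as optimal map, displacement interpolation as geodesic) for the geometric claims. The paper's proof is terser but structurally identical; your extra care with the pseudoinverse identities on $\supp R_t$ and the verification that $\QDb=\sfM(\cdot,0)\of\QR(\cdot,0)$ is appropriate and fills in steps the paper leaves implicit.
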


\begin{proof}
	\eqref{static_control}, \eqref{static_m}, and \eqref{opt_cost_2.eq} follow directly from an application of Proposition \ref{trans_mod_equiv} and the observation that the $L^2$ space of quantile functions is isometric to the 2-Wasserstein space of densities. The form of the trajectory \eqref{orig_trajectory} comes from Lemma \ref{Q_pushfw_lem} and properties of the pushforward:
	\begin{align}
		R_t ~&=~ \left[ \QR(\cdot,t) \right]_\# \bone~=~  \left[ \Phi_t \of \QR(\cdot,0) \right]_\# \bone \nonumber \\
		~&=~ \left[ \Phi_t \right]_\#  \big( \left[ \QR(\cdot,0)\right]_\# \bone  \big) ~=~  \left[ \Phi_t \right]_\# R_0 , \\
		\Phi_t ~&=~ \phi_\rmr(t,0) \, I ~+~ \big( 1-\phi_\rmr(t,0) \big) \, \sfM(\cdot,0) .
	\end{align}
	The observations that $\sfM$ is an optimal transport map and $\eqref{orig_trajectory}$ a Wasserstein geodesic come from \cite[Sections~2.1-2.2]{Santambrogio2015} and \cite[Theorem~5.27]{Santambrogio2015}, respectively.
\end{proof}

Figure \ref{fig:three graphs} shows a numerical example with a discrete resource distribution consisting of eleven agents of unequal weight and a continuous static demand distribution.

\subsection{Periodic Demand} \label{periodic_case}

To demonstrate the method on time-varying demands, we now consider the case where the demand is a $\sT$-periodic function, i.e., for all $t\geq 0$, $D_t = D_{t+\sT}$, implying that $d_i(t) = d_i(t+\sT)$ for each $i$. In such problems, one takes an infinite time horizon and the cost as an asymptotic average 
\be
J ~:=~ \lim_{T\rightarrow\infty} \frac{1}{\sT} \smint{0}{T} C \lB R_t, D_t,V_t \rB dt , 
\label{inf_hor_T.eq}
\ee
where the instantaneous cost $C$ is taken as the integrand in~\req{opt2}. The transformations described in Propositions \ref{trans_mod_equiv} and \ref{decoup_mod_equiv} apply to this problem. The optimal controls \req{u_opt_form}-\req{diff_r_i} in this case are again a combination of feedback and feedforward terms. The feedback gains are exactly the infinite-horizon LQR feedbacks, whereas the feedforward terms are  $\sT$-periodic functions given by the steady-state responses of the systems~\req{diff_g} to the $\sT$-periodic inputs $d_i$. Specifically, $\ric=\alpha$ solves an algebraic Riccati equation, and the steady-state responses of $\ff_i$ and $r_i$ satisfy
\begin{align}
	\dot{\ff}_i(t) 	~&=~ 	\tfrac{1}{\alpha} \ff_i(t) ~+~ d_i(t) , 						\label{d_to_ff.eq}			\\
	\dot{r}_i(t) 	~&=~ 	-\tfrac{1}{\alpha}  r_i(t)~-~ \tfrac{1}{\alpha^2}  \ff_i(t) .		\label{ff_to_r.eq}
\end{align}
Note that the response from $d_i$ to $r_i$ is the cascade of an anti-casual~\req{d_to_ff} followed by a causal~\req{ff_to_r} first-order linear time-invariant (LTI) system. The frequency response from $d_i$ to $r_i$ can therefore be found as the ratio  $\rha_i/\dha_i$ of their Fourier transforms, which is readily computed from~\req{d_to_ff}-\req{ff_to_r} as 
\begin{equation} \label{periodic_traj}
	\frac{\rha_i(\omega)}{\dha_i(\omega)}  
	~=~ \dfrac{1}{\alpha^2\omega^2 + 1}.
\end{equation}
Here, $\hat{r}_i$ and $\hat{d}_i$ are the Fourier coefficients of Dirac delta ``combs'' which are supported at harmonics of the fundamental frequency $\bar{\omega} = 2\pi/\sT$.
Observe that \eqref{periodic_traj} is the frequency response of a (non-causal) second-order low-pass filter with cutoff frequency $1 / {\alpha}$. In other words, the optimal trajectory tracks predominantly the low-frequency components of the reference. Also note that the phase response is identically zero, so that the state is perfectly in-phase with the reference. 
This is a reflection of the non-causal nature of the feedforward term. 

We can also write an expression for the cost \eqref{inf_hor_T.eq} in terms of the Fourier transform of $d_i$ as follows. Letting $\sT$ be the period of the reference signal and using Parseval's identity, we have
\begin{align}
	&J_i = \frac{1}{\sT} \smint{t}{t+\sT} (r_i(t) - d_i(t))^2 + \alpha^2 u_i^2(t) \, dt \\
	&= \left\| r_i - d_i \right\|_{L^2}^2 + \alpha^2 \left\| u_i \right\|_{L^2}^2
	=~\left\| \hat{r}_i - \hat{d}_i \right\|_{\ell^2}^2 + \alpha^2 \left\| \hat{u}_i \right\|_{\ell^2}^2 \nonumber \\
	&= \left\| \tfrac{-\alpha^2 \omega^2}{\alpha^2 \omega^2 + 1} \, \hat{d}_i \right\|_{\ell^2}^2 + \alpha^2 \left\| \tfrac{j \omega}{\alpha^2 \omega^2 + 1} \,\hat{d}_i \right\|_{\ell^2}^2
	= \left\| \tfrac{\alpha \omega}{\sqrt{\alpha^2 \omega^2 + 1}} \, \hat{d}_i \right\|_{\ell^2}^2 . \nonumber
\end{align}
The overall cost $J$ is therefore
\begin{equation}
	\smint{0}{1} \left\| \tfrac{\alpha \omega}{\sqrt{\alpha^2 \omega^2 + 1}}  \, \QDbh(z,\omega) \right\|_{\ell^2}^2 \, dz ~+~ K ,
\end{equation}
with $K$ now being expressed in the frequency domain as
\begin{equation}
	K ~=~ \smint{0}{1} \left\| \QDbh(z,\omega) - \hat{Q}_{\rmd}(z,\omega) \right\|_{\ell^2}^2 \, dz .
\end{equation}
We summarize this all in the following statement.

\begin{prop} \label{periodic_soln}
	Consider the problem \eqref{opt2.eq} with $\Omega \subset \R$, $D$ a $\sT$-periodic function, and 
	with the infinite-horizon performance objective~\req{inf_hor_T}. 
	The relation between the (temporal) Fourier transforms of the optimal  steady-state resource and demand 
	quantiles is given by 
	\be
	\QRh(z,\omega) ~=~\tfrac{1}{\alpha^2\omega^2+1} \, \QDbh (z,\omega), 
	\ee
	where the average $\QDb$ is taken with respect to the $\QR(\cdot,0)$-level-set partition of $[0,1]$. This solution attains the cost
	\begin{multline}
		J(R_0,D;\alpha) ~=~ \smint{0}{1} \left\| \tfrac{\alpha \omega}{\sqrt{\alpha^2 \omega^2 + 1}}  \, \QDbh(z,\omega) \right\|_{\ell^2}^2 \, dz \\
		~+~ \smint{0}{1} \left\| \QDbh(z,\omega) - \hat{Q}_{\rmd}(z,\omega) \right\|_{\ell^2}^2 \, dz .
	\end{multline}
\end{prop}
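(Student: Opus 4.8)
The plan is to reuse the two-stage reduction from the finite-horizon case and then solve the resulting scalar problems in the frequency domain. First I would invoke Propositions~\ref{trans_mod_equiv} and~\ref{decoup_mod_equiv}, which carry over unchanged to the infinite-horizon average-cost formulation~\req{inf_hor_T}: the problem decouples into a family of scalar average-cost LQ tracking problems indexed by the $\QR(\cdot,0)$-level-set partition $\caP$, with state $r_i = \QR(P_i,\cdot)$, input $u_i = U(P_i,\cdot)$, dynamics $\dot r_i = u_i$, and $\sT$-periodic reference $d_i(t) = \QDb(\cdot,t)(P_i)$, plus the additive constant $K$ now measured per period. It then suffices to solve a single scalar problem and reassemble through~\eqref{reconstruction}.

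For the scalar step I would appeal to the standard infinite-horizon LQ servomechanism solution~\cite[Section~5.2]{Sage1977}: the feedback gain is the stabilizing root $\ric = \alpha$ of the algebraic Riccati equation $\tfrac{1}{\alpha^2}\ric^2 - 1 = 0$ (the $\sT\to\infty$ limit of~\req{f_explicit}); the feedforward $\ff_i$ is the unique bounded solution of~\req{d_to_ff}, which is anti-causal since the homogeneous equation is unstable forward in time and which inherits $\sT$-periodicity from $d_i$; and the optimal state $r_i$ is the $\sT$-periodic steady-state response of the forward-stable closed loop~\req{ff_to_r}. Since all relevant signals are $\sT$-periodic I would then pass to Fourier series supported on harmonics of $\bar\omega = 2\pi/\sT$, read off $\hat\ff_i = (j\omega - \tfrac{1}{\alpha})^{-1}\hat d_i$ and $\hat r_i = -\tfrac{1}{\alpha^2}(j\omega + \tfrac{1}{\alpha})^{-1}\hat\ff_i$, and compose these, using the factorization $(j\omega+\tfrac{1}{\alpha})(j\omega-\tfrac{1}{\alpha}) = -\tfrac{1}{\alpha^2}(\alpha^2\omega^2+1)$, to obtain $\hat r_i = (\alpha^2\omega^2+1)^{-1}\hat d_i$, which is~\eqref{periodic_traj}. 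Evaluating this at each partition element and recalling $r_i = \QR(P_i,\cdot)$, $d_i = \QDb(P_i,\cdot)$ gives the stated relation between $\QRh$ and $\QDbh$.

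For the cost I would apply Parseval's identity to convert the per-period $L^2$ norms of the tracking error $r_i - d_i$ and the input $u_i = \dot r_i$ into $\ell^2$ sums over the harmonics. Substituting $\hat r_i - \hat d_i = -\tfrac{\alpha^2\omega^2}{\alpha^2\omega^2+1}\hat d_i$ and $\hat u_i = \tfrac{j\omega}{\alpha^2\omega^2+1}\hat d_i$, the frequency-by-frequency weight collapses via $\tfrac{\alpha^4\omega^4 + \alpha^2\omega^2}{(\alpha^2\omega^2+1)^2} = \tfrac{\alpha^2\omega^2}{\alpha^2\omega^2+1}$, so $J_i = \big\|\tfrac{\alpha\omega}{\sqrt{\alpha^2\omega^2+1}}\,\hat d_i\big\|_{\ell^2}^2$. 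Integrating over $z\in[0,1]$ as in Proposition~\ref{decoup_mod_equiv}, and writing the performance-limitation constant in the Parseval form $K = \smint{0}{1}\big\|\QDbh(z,\omega) - \hat{Q}_{\rmd}(z,\omega)\big\|_{\ell^2}^2\,dz$, produces the claimed expression for $J$.

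The main obstacle is not the algebra but the rigor of the infinite-horizon step: one must verify that the limit in~\req{inf_hor_T} exists along the optimal trajectory, that the transient governed by the stable closed-loop mode $e^{-t/\alpha}$ contributes nothing to the asymptotic average so that only the periodic steady state matters, and --- as flagged in the footnote to Proposition~\ref{decoup_mod_equiv} --- that the input constraint~\req{trans_mod_three} stays inactive at the infinite-horizon optimum so the decoupling is legitimate. I would handle this by writing the finite-horizon optimal cost explicitly, dividing by $T$, and showing the boundary-layer terms vanish as $T\to\infty$; forward-stability of the closed loop and periodicity of the reference make these estimates routine, but they are the part that genuinely has to be written out.
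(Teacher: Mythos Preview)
Your proposal is correct and follows essentially the same approach as the paper: invoke Propositions~\ref{trans_mod_equiv} and~\ref{decoup_mod_equiv}, pass to the infinite-horizon scalar LQ tracking solution with $\ric=\alpha$, compute the frequency response via the cascade~\req{d_to_ff}--\req{ff_to_r}, and evaluate the cost by Parseval. If anything, you are more scrupulous than the paper, which states rather than proves the infinite-horizon limiting argument and the inactivity of the constraint~\req{trans_mod_three} in this setting; the paper's derivation is given inline before the proposition and does not spell out the boundary-layer estimates you flag.
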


Recall that in the standard infinite-horizon LQR tracking problem the initial state plays no role in the steady-state response (i.e. the initial state is ``forgotten''). In the present problem, however, the initial state does play a role in the averaging operation $\QDb$. This is due to the additional (compared to the standard LQR tracking problem) constraint~\req{trans_mod_three}, which in turn affects the actual demand signals~\req{di_averaged} in the individual scalar LQ tracking problems.

Figures \ref{per_swarm} and \ref{per_sig} show a numerical example with a discrete resource and a continuous demand given by a periodically time-varying Gaussian mixture
\begin{multline}
	D(x,t) ~=~ (1 + \sin (2 \pi t)) \, \mathcal{N}(2.5,1) \\
	~+~ (1 - \sin (2 \pi t)) \, \mathcal{N}(7.5,1) .
	\label{dem_periodic.eq}
\end{multline}

\begin{figure}[t]
	\centering
	\includegraphics[width=0.9\linewidth]{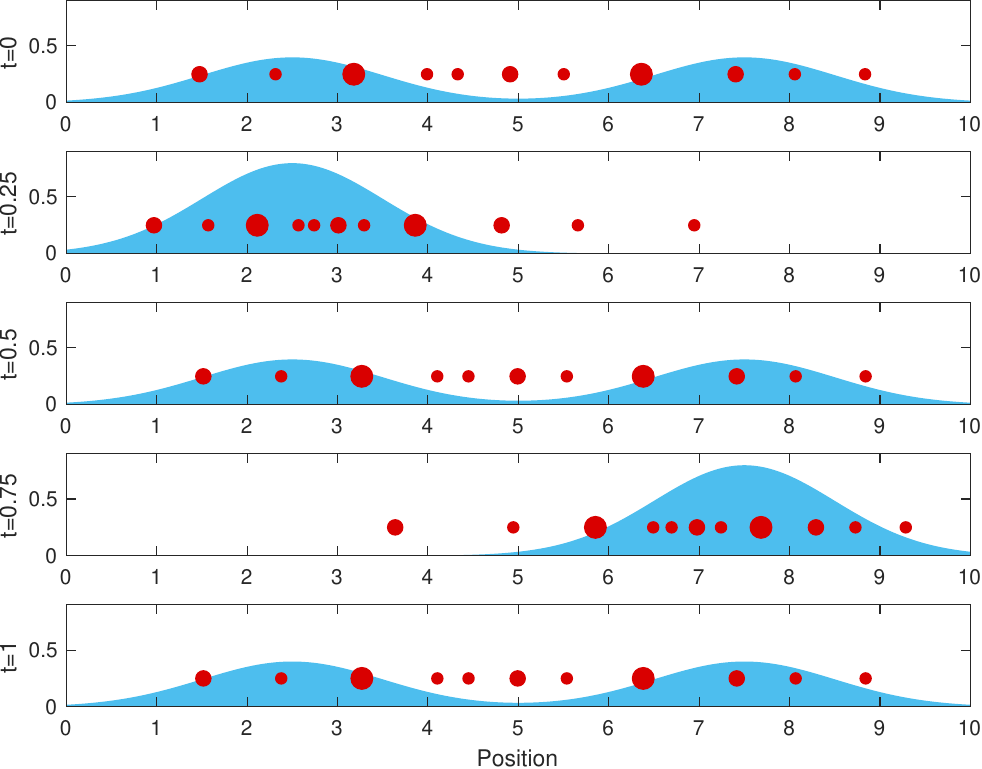}
	\mycaption{Timeseries of steady-state resource (red) and demand (blue) for $\alpha = 0.08$. The resource consists of eleven agents of unequal mass, while the demand consists of a periodically-time-varying Gaussian mixture. Notice that the resource distribution does not track the demand distribution as closely as possible due to the non-negligible motion cost.}
	\label{per_swarm}
\end{figure}

\begin{figure}[t]
	\centering
	\includegraphics[width=0.88\linewidth]{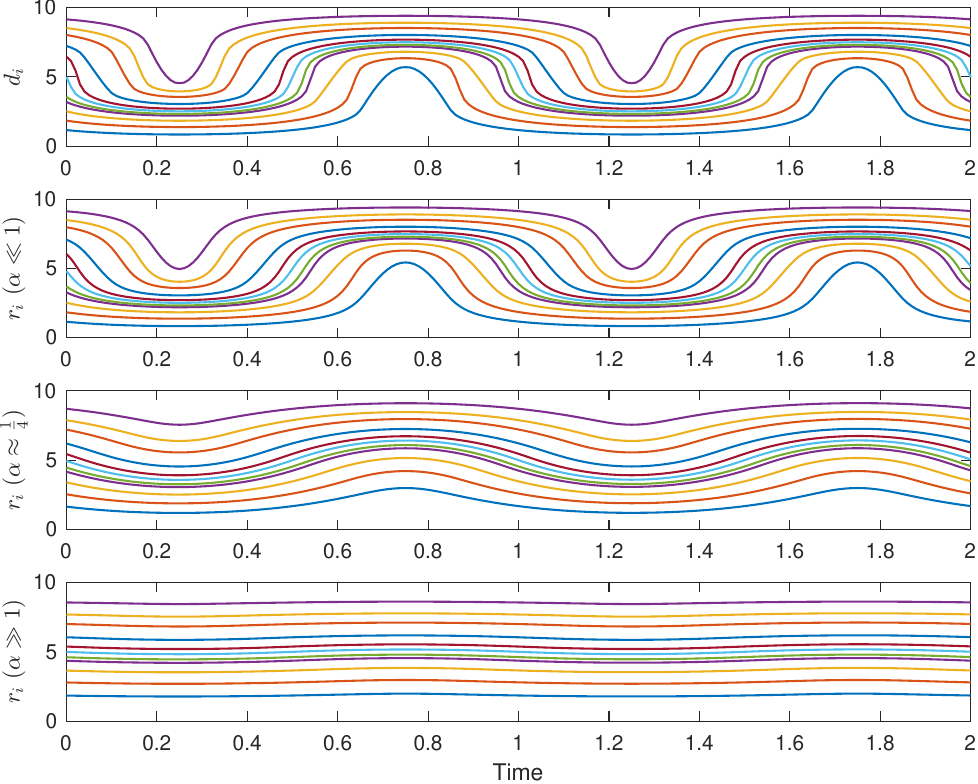}
	\mycaption{The tracking signals $d_i$ (top) and resulting steady-state trajectories $r_i$ for small, moderate, and large values of $\alpha$. As $\alpha$ becomes larger, the resource swarm moves less due to the higher motion cost. The optimal resource trajectory can be interpreted as low-pass-filtered version of the demand trajectory with cutoff frequency $1 / \alpha$. Notice that while the demand distribution~\req{dem_periodic} is a pure sinusoid at each $x$, the transformation from $D$ to $\QD$ is nonlinear and thus the corresponding quantile functions contain higher harmonics.}
	\label{per_sig}
\end{figure}


\section{Conclusion} \label{conclusion}

In this paper, we introduced a novel model for tracking control for large swarms of autonomous agents. We showed that when the spatial domain is one-dimensional, this problem can be transformed into an equivalent family of decoupled scalar LQ tracking problems. Full solutions were provided in this case, and the special cases where the reference is static and periodically time-varying were further investigated. We note that explicit solutions like those presented in this paper are unusual for these types of problems. We thus find these results to be promising, and present several immediately obvious directions for future work. 
1) There are interesting cases where the mass of distributions may be changing or where some resources or demands may go unassigned, and so it is desirable to generalize assumption \eqref{normalized_mass_eq}. This may be accommodated using various notions of ``unbalanced optimal transport'' \cite{Chizat2018,Gangbo2019}. 
2) While the necessary conditions \eqref{necc_cond_1}-\eqref{necc_cond_2} appear to be very challenging at first, we believe that they are fairly structured. This gives hope that reasonable numerical schemes may actually be possible.
3) While the solutions presented here were centralized, they do have some inherent ``distributed structure''. In particular, resource agents only need knowledge of the location of their assigned demands. Thus, by computing these assignments offline or in a distributed manner (as in \cite{Krishnan2018a}), one can obtain a naturally distributed implementation. 
4) It is desirable to translate the optimal noncausal solutions presented here into causal solutions which could be used for real-time tracking for a priori-unknown reference signals. 
5) While the one-dimensional case presented here could be used for swarm deployment in one-dimensional settings (e.g., along a highway or border), the two-and three-dimensional cases are clearly of high interest. Indeed, the transformation to quantile functions has a natural generalization in higher dimensions, and in certain cases one can obtain a similar decoupling into LQ problems. 
Some preliminary results in directions (4) and (5) can be found in our recent papers \cite{Emerick2023,Emerick2024}.


\appendix

We recall the following facts about the CDF $F_\mu$ and quantile function $Q_\mu$ which we will use in our proofs \cite{Kaempke2015}:
\begin{align}
	Q_\mu \of F_\mu (x) &~=~ \min \{ \xi : F_\mu(\xi) = F_\mu(x) \} \label{q_of_f} \\ 
	Q_\mu \of F_ \mu &~=~ I \qquad \mu \text{-a.e.} \label{q_f_inverses} \\ 
	F_\mu \of Q_\mu(z) &~=~ \max \{ \zeta : Q_\mu(\zeta) = Q_\mu(z) \} . \label{f_of_q}
\end{align}

\subsection{Proof of Lemma~\ref{equivalent_dynamics}} \label{equivalent_dynamics_appen}

\begin{proof}
	Equation~\req{F_PF} is related to~\req{R_PF} by integration
	\begin{align}
		\smint{\infty}{x}	\!\! \partial_t  R(\xi,t) ~d\xi 
		&~=~	 - \smint{\infty}{x} \!\! \partial_\xi \big( V(\xi,t) ~R(\xi,t) 	\rom \big)  d\xi 		\nonumber	\\
		\Rightarrow\hsom
		\partial_t  \smint{\infty}{x}	\!\!  R(\xi,t) ~d\xi 
		&~=~	 -  V(x,t) ~R(x,t) 				\\
		\Rightarrow\hsom 		 
		\partial_t \FR(x,t) 	&~=~ 	-  V(x,t) ~\partial_x \FR(x,t) . \nonumber
	\end{align}
	To relate~\req{Q_PF} to \eqref{F_PF.eq} we first recall that the solutions of~\req{F_PF} can be expressed as follows~\cite{Olver2014}.  
	Let $\Phi$ be the flow map of the vector field $V$
	\be
	\partial_t  \Phi_t(x)  ~=~ V  \big( \Phi_t(x), t		\rom  \big) , 
	\hstm 
	\Phi_0(x)~=~x. 
	\label{flow_map.eq}
	\ee
	Then the solution $\FR$ at any time is given by the {\em function pushforward} of $\FR$ by the flow $\Phi$
	\be
	\begin{aligned} 
		\forall x &\in\R, & 
		\FR(x,t) ~&=~ \FR \!\lp \Phi_t^{\sm1}(x),0 \rp  ,					\\ 
		\Leftrightarrow\hstm 
		\forall x &\in\R, & 
		\FR \! \lp \Phi_t(x),t \rp ~&=~ \FR \!\lp x,0 \rp .				
	\end{aligned} 
	\label{F_0_t.eq}
	\ee
	We can use the last equation to introduce a change of variables 
	\be
	z ~:=~ \FR(x,0) ~=~ \FR \! \lp \Phi_t(x),t \rp .
	\label{z_change.eq}
	\ee
	Acting with $Q(\cdot,t)$ on these last equations gives 
	\begin{equation} 
		\QR(z,t) = \QR \!\lp \FR(x,0), t\rp  = 	\QR \!\lp \FR \lp \Phi_t(x),t \rp,t \romn \rp 			
		= \Phi_t(x) 				,
	\end{equation} 
	where the last equality follows from \eqref{q_f_inverses}. 
	Taking time derivatives of both sides and using the flow map property~\req{flow_map}
	\begin{equation}
		\partial_t \QR(z,t) = \partial_t \Phi_t(x) = V  \!\lp \Phi_t(x), t \romn  \rp 
		=  V  \!\lp \QR(z,t), t \romn  \rp, 
	\end{equation}
	which is exactly equation~\req{Q_PF}.
	Finally, to relate \eqref{R_PF.eq} to \eqref{Q_PF.eq}, we first act with $Q(\cdot,0)$ on \eqref{z_change.eq} to obtain $x = \QR(z,0)$. We can then write $\QR(z,t) ~=~ \Phi(\QR(z,0),t)$. Applying Lemma \ref{Q_pushfw_lem} and properties of the pushforward, we can write
	\begin{multline} \label{QR_pushfw}
		R_t ~=~ \left[ \QR(\cdot,t) \right]_\# \bone~=~  \left[ \Phi_t(\QR(\cdot,0)) \right]_\# \bone \\
		~=~ \left[ \Phi_t \right]_\#  \big( \left[ \QR(\cdot,0))\right]_\# \bone  \big) ~=~  \left[ \Phi_t \right]_\# R_0 .
	\end{multline}
	We then know from \cite[Theorem~4.4]{Santambrogio2015} that this $R$ along with the $V$ which generates $\Phi$ satisfy \eqref{R_PF.eq}.	
\end{proof}

\subsection{Proof of Lemma \ref{equivalent_constraints_lem}} \label{equivalent_constraints_appen}

\begin{proof}
	The central fact that we need to prove is that $\mathcal{T}$ is a bijection between the two sets. It suffices to show then that (1) $\mathcal{T}(R,V)$ is in the second set, (2) that $\mathcal{T}^{-1}(\QR,U)$ is in the first set, and (3) that $\mathcal{T}$ and $\mathcal{T}^{-1}$ are actually inverses.
	
	(1) We have $\mathcal{T}(R,V) = (\QR,U) = (\QR, V \of \QR)$. We can see that $(\QR,U)$ satisfy the dynamic constraint by Proposition \ref{equivalent_dynamics}, $\QR(\cdot,0)$ satisfies the monotonicity constraint because quantile functions are monotone nondecreasing, and $U$ satisfies the input constraint because of the composition with $\QR$.
	
	(2) We have that $\mathcal{T}^{-1}(\QR,U) = (R,V) = (R, U \of \FR)$. We can see that as long as $\QR(\cdot,t)$ is monotone nondecreasing, it is an actual quantile function and so $R$ and $\FR$ are well-defined. Proposition \ref{equivalent_dynamics} then tells us that $(R,V)$ satisfy the dynamic constraint. We claim that the monotonicity of $\QR(\cdot,t)$ follows from the second and third constraints. This is a straightforward proof by contradiction: suppose that $\QR(z_1,t)$ were $> \QR(z_2,t)$ for some $z_1 < z_2$. Then since $\QR(z_1,0) \leq \QR(z_2,0)$, by the intermediate value theorem, there must exist some time $t^* \in [0,t)$ where $\QR(z_1,t^*) = \QR(z_2,t^*)$. However, this would require that $U(z_1,t') = U(z_2,t')$ and $\QR(z_1,t') = \QR(z_2,t')$ for all $t' \geq t^*$. This is a contradiction, and so we know that $\QR(\cdot,t)$ must be monotone.
	
	(3) We can see that $\mathcal{T}^{-1} \of \mathcal{T} = I$ as long as $V \of \QR \of \FR = V$. This is ensured because $\QR \of \FR = I$ $\mu$-a.e. \eqref{q_f_inverses}, and we only care about the value of $V$ up to $\mu$-a.e. equivalence. Similarly, we can see that $T \of T^{-1} = I$ as long as $U \of \FR \of \QR = U$. This is ensured by the input constraint because $\FR \of \QR (z) = \max \{ \zeta : \QR(\zeta) = \QR(z) \}$ \eqref{f_of_q}.
\end{proof}

\subsection{Proof of Proposition \ref{decoup_mod_equiv}} \label{decoup_mod_equiv_appen}

\begin{proof}
	To complete the proof, we wish to show that the constraint \eqref{trans_mod_three.eq} is automatically satisfied by $(\QR,U)$ as reconstructed from the solutions to the (independent) scalar LQ tracking problems via \eqref{reconstruction}. It suffices to show that solutions to these scalar problems satisfy
	\begin{equation} \label{subset_constraint}
		r_i(t) ~=~ r_j(t) \quad \Rightarrow \quad u_i(t) ~=~ u_j(t) .
	\end{equation}
	This condition is satisfied trivially if $i = j$ and so we only need to consider the case where $i \neq j$. In this case, it turns out that $r_i(t)$ is never equal to $r_j(t)$, and so the condition holds.
	
	To prove this, suppose without loss of generality that $i < j$. Then $r_i(0) < r_j(0)$. Since $\QR(\cdot,0)$ and $\QD(\cdot,t)$ are both monotone nondecreasing, $d_i(t) \leq d_j(t)$ for all $t$. It is then seen from equation \eqref{g_integral.eq} that since $\phi_\rmy$ is always positive, $\ff_i(t) \geq \ff_j(t)$ for all $t$. Now, define $\beta(t) := r_j(t) - r_i(t)$ to be the difference between the respective solutions, differentiate, and apply the dynamics $\dot{r}_i = u_i$ with the optimal control \eqref{input_u} to obtain
	\begin{equation}
		\dot{\beta} ~=~ - \tfrac{1}{\alpha^2}  \ric \beta ~-~ \tfrac{1}{\alpha^2} ( \ff_j - \ff_i ) .
	\end{equation}
	Using the fact that $\ff_i(t) \geq \ff_j(t)$, we have that $\dot{\beta} \geq - \tfrac{1}{\alpha^2}  \ric \beta$. By comparison, then, and since $\beta(0) > 0$, we have that
	\begin{equation}
		\beta(t) ~\geq~ \beta(0) \, \exp\lB\sm  \smint{\tau}{t}  \tfrac{1}{\alpha^2}  \ric (s) \,ds \rB ,
	\end{equation}
	and we conclude that $\beta(t) > 0$ or that $r_i(t) < r_j(t)$.
\end{proof}




\vspace{-15mm}

\begin{IEEEbiography}[{\includegraphics[width=1in,height=1.25in,clip,keepaspectratio]{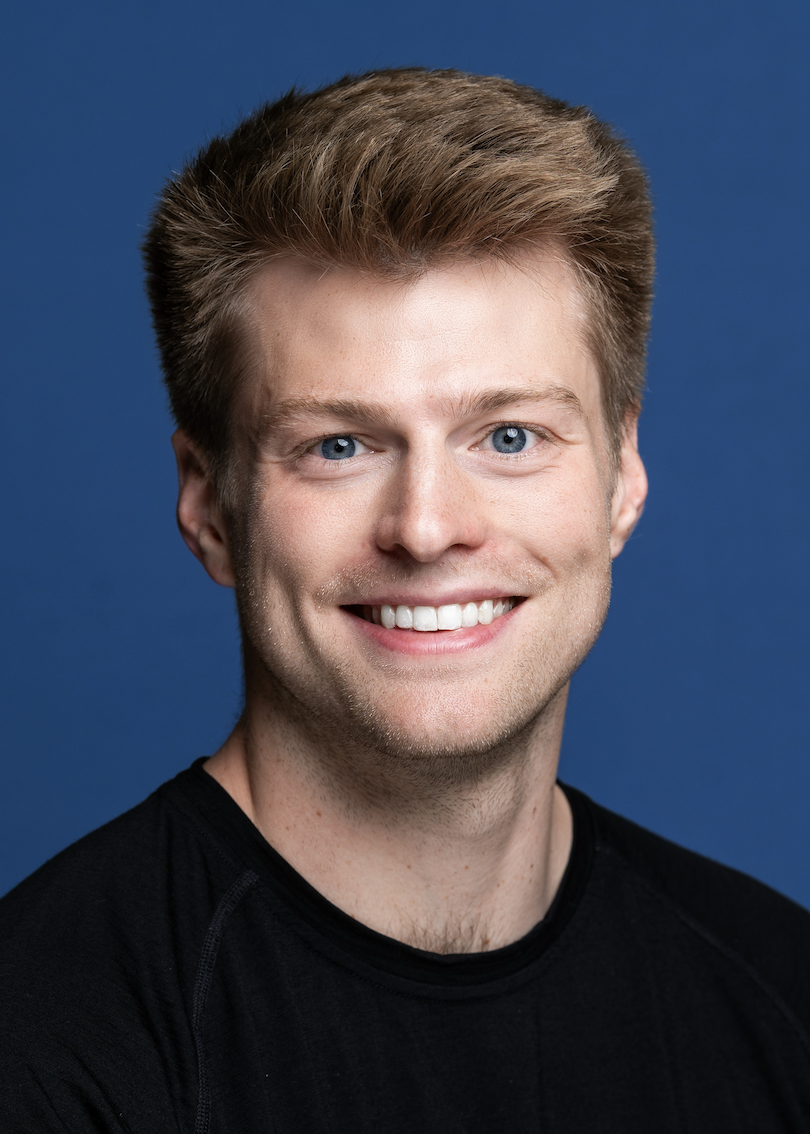}}]{Max Emerick} (GS'22)
	received the B.S. degree in mechanical engineering from California Polytechnic State University San Luis Obispo in 2020 and the M.S. degree in mechanical engineering from the University of California Santa Barbara in 2022. He is currently pursuing the Ph.D. degree in mechanical engineering at the University of California Santa Barbara.
	His research interests include optimal control and distributed parameter systems.
	Mr. Emerick is a recipient of the IFAC Young Author Award, the Conference on Decision and Control Outstanding Student Paper Award, and the National Defense Science and Engineering Graduate (NDSEG) Fellowship.
\end{IEEEbiography}

\vspace{100mm}

\begin{IEEEbiography}[{\includegraphics[width=1in,height=1.25in,clip,keepaspectratio]{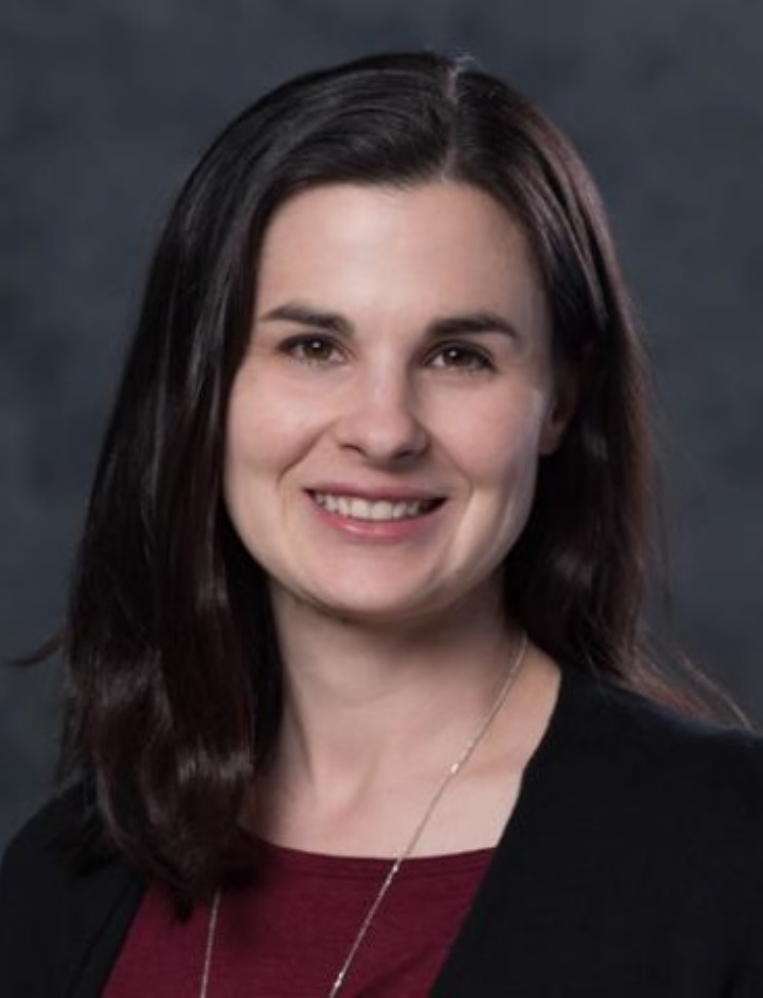}}]{Stacy Patterson} 
	Stacy Patterson is an Associate Professor in the Department of Computer Science at Rensselaer Polytechnic Institute. She received the M.S. and Ph.D. in computer science from UC Santa Barbara in 2003 and 2009, respectively. From 2009-2011, she was a postdoctoral scholar at the Center for Control, Dynamical Systems and Computation at UC Santa Barbara. From 2011-2013, she was a postdoctoral fellow in the Department of Electrical Engineering at Technion - Israel Institute of Technology. Dr. Patterson is the recipient of a Viterbi postdoctoral fellowship, the IEEE CSS Axelby Outstanding Paper Award, and an NSF CAREER award. Her research interests include distributed systems and machine learning.
\end{IEEEbiography}

\vspace{-115mm}

\begin{IEEEbiography}[{\includegraphics[width=1in,height=1.25in,clip,keepaspectratio]{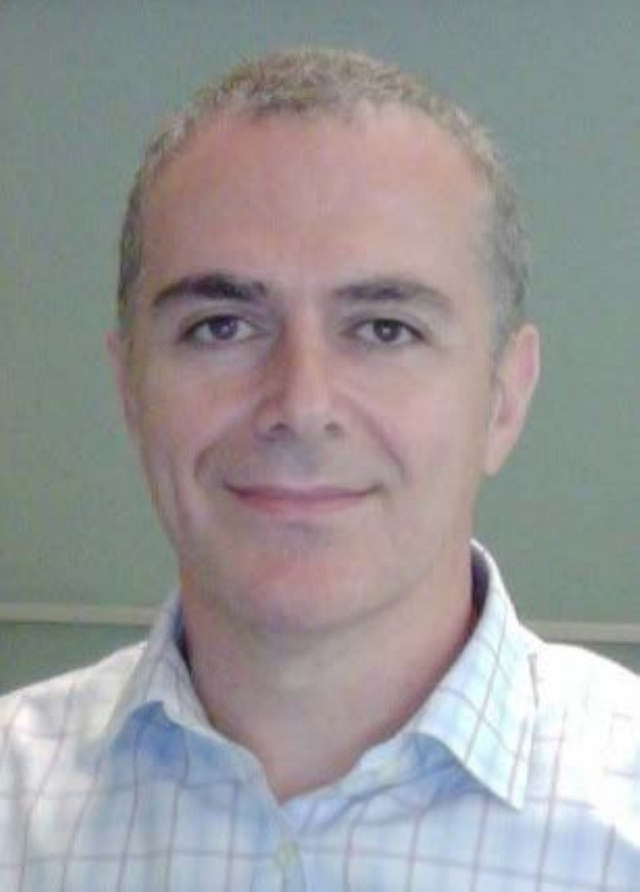}}]{Bassam Bamieh}
(Fellow, IEEE) received the B.Sc. degree in electrical engineering and physics from Valparaiso University, Valparaiso, IN, USA, in 1983, and the M.Sc. and Ph.D. degrees in electrical and computer engineering from Rice University, Houston, TX, USA, in 1986 and 1992, respectively.
From 1991 to 1998, he was an Assistant Professor with the Department of Electrical and Computer Engineering, and the Coordinated Science Laboratory, University of Illinois
at Urbana-Champaign, after which he joined the University of California at Santa Barbara (UCSB), where he is currently a Professor of Mechanical Engineering. His research interests include robust and optimal control, distributed and networked control and dynamical systems, and related problem in fluid and statistical mechanics and thermoacoustics.
Dr. Bamieh is a past recipient of the IEEE Control Systems Society G. S. Axelby Outstanding Paper Award (twice), the AACC Hugo Schuck Best Paper Award, and the National Science Foundation CAREER Award. He was a Distinguished Lecturer of the IEEE Control Systems Society (twice), and is a Fellow of the International Federation of Automatic Control (IFAC).                                                                                     
\end{IEEEbiography}

\end{document}